\numberwithin{equation}{section}
\newtheorem{theorem}{Theorem}[section]
\newtheorem{application}{Application}[section]
\def\correspondingauthor{\footnote{Corresponding author : bhagwatistats@gmail.com (Bhagwati Devi)}}
\begin{document}
	\begin{center}
		{\Large\bf Inference of Half Logistic Geometric Distribution Based on  Generalized Order Statistics}\\
		\vspace{0.5cm}
		
		Neetu Gupta$^{a}$, S. K. Neogy$^{a}$, Qazi J. Azhad$^{b}$\correspondingauthor{}, Bhagwati Devi$^{c}$ \\
		$^{a}$Indian Statistical Institute, Sanswal Marg, New Delhi.\\
		$^{b}$Department of Mathematics, Shiv Nadar Institution of Eminence, Dadri, India\\
		$^{c}$Department of Statistics, Central University of Jharkhand, Jharkhand, India\\
	\end{center}
	\hrule
	\begin{abstract}
		\noindent As the unification of various models of ordered quantities, generalized order statistics act as a simplistic approach introduced in \cite{kamps1995concept}. In this present study, results pertaining to the expressions of marginal and joint moment generating functions from half logistic geometric distribution are presented based on generalized order statistics framework. We also consider the estimation problem of $\theta$ and provides a Bayesian framework. The two widely and popular  methods called Markov chain Monte Carlo and Lindley approximations are used for obtaining the Bayes estimators.The results are derived under symmetric and asymmetric loss functions. Analysis of the special cases of generalized order statistics, \textit{i.e.,} order statistics is also presented. To have an insight into the practical applicability of the proposed results, two real data sets, one from the field of Demography and, other from reliability have been taken for analysis.\\
		
		\noindent    \textbf{Keywords:} Moment generating function, order statistics, Markov chain Monte Carlo, Lindley approximation, COVID 19 data
	\end{abstract}
\noindent
\textbf{Mathematics Subject Classification:} 62C10, 62F10, 62F15\\
	\section{Introduction}
	Generalized order statistics (\textit{gos}) have been introduced as an extension of ordinary order statistics by using quantile transformation based on the distribution function $F$ (see \cite{kamps1995concept}. Interestingly, in view of the joint density of respective \textit{gos}, various sub models for ordered data, such as order statistics, record values and progressively Type-II censored order statistics are included in this set-up. Various researchers have studied \textit{gos} in various directions such as recurrence relations of moments, estimation of parameters, distributional properties of different distributions, etc. Being a general structure to variety of models, it is widely popular among researchers. The \textit{gos} provide a general framework to deal with increasingly ordered random variables. \par
	Assuming $X_{1} ,X_{2} ,...$ be a sequence of independent identically distributed (\textit{iid}) random variables with distribution function (\textit{df}) $F\left(x\right)$ and probability density function (\textit{pdf}) $f(x)$. Let $k\ge 1,n\ge 2,n\in N,\tilde{m}=\left(m_{1} ,m_{2} ,...,m_{n-1} \right)\in \Re ^{n-1} ,M_{r} =\sum_{j=r}^{n-1}m_{j}$ such that $\gamma _{r\,}=k+n-r+M_{r} >0$ $\forall \, \, \, \, \, r\in \left\{1,2,...,n-1\right\}$. Then $X\left(r,n,\tilde{m},k\right)$, $r=1,2,\ldots,n$, is called \textit{gos} based on $F\left(x\right)$, if their  joint \textit{pdf} is of the form
	\begin{eqnarray}\label{eq1}
		k\left(\prod _{j=1}^{n-1}\gamma _{j}  \right)\left(\prod _{i=1}^{n-1}\left[1-F\left(x_{i} \right)\right]^{m_{i} } f\left(x_{i} \right) \right)\left[1-F\left(x_{n} \right)\right]^{k-1} f\left(x_{n} \right)\nonumber,\, \, \, \, \, \, \, \, \, \, \, \, \, \, \, \, \,\\
		F^{-1} \left(0\right)<x_{1} \le ....\le x_{n} <F^{-1} \left(1\right).
	\end{eqnarray}
	Considering appropriate values of model parameters  $m_{i}$, $k$ and $\gamma _{i}$ the model of \textit{gos} reduces to various important models e.g., when $m_{1} =m_{2} =...=m_{n-1} =0,$ $k=1$, $\gamma _{i} =\left(1+n-i\right)$, this model reduces to order statistics and for $m_{1} =m_{2} =...=m_{n-1} =-1$, $\gamma _{i} =k$,  $k\in N,$ this model reduces to $k^{th}$ upper record values.
	Here, we consider the case when $m_{i} =m_{j} =m$. Then the density function of $r^{th} $ \textit{gos} $X\left(r,n,m,k\right)$ is given by
	\begin{equation} \label{gos.cdf}
		f_{X\left(r,n,m,k\right)} =\frac{C_{r-1} }{\left(r-1\right)!} \left[\bar{F}\left(x\right)\right]^{\gamma _{r} -1} f\left(x\right)g_{m}^{r-1} \left[F\left(x\right)\right].     
	\end{equation} 
	The joint \textit{pdf} of $X\left(r,n,m,k\right)$ and $X\left(s,n,m,k\right)$, $1\le r<s\le n$ is
	\begin{align}  \label{jointpdf.gos}
		f_{X\left(r,n,m,k\right)X\left(s,n,m,k\right)} &=\frac{C_{s-1} }{\left(r-1\right)!\left(s-r-1\right)!} \left[\bar{F}\left(x\right)\right]^{m} f\left(x\right)g_{m}^{r-1} \left[F\left(x\right)\right] \nonumber \\
		&\times \left[h_{m} F\left(y\right)-h_{m} F\left(x\right)\right]^{s-r-1}\left[\bar{F}\left(y\right)\right]^{\gamma _{s} -1} f\left(y\right)\, \, , 
	\end{align}
	where,
	\[\bar{F}\left(x\right)=1-F\left(x\right)  , C_{r-1} =\prod _{i=1}^{r}\gamma _{i}   ,\] 
	\[h_{m} \left(x\right)=\left\{\begin{array}{c} {-\frac{1}{m+1} \left(1-x\right)^{m+1}, \qquad m\ne -1} \\ {-\ln \left(1-x\right),\qquad \qquad m=-1} \end{array}\right. \] 
	and
	\[g_{m} \left(x\right)=h_{m} \left(x\right)-h_{m} \left(0\right),   x\in \left[0,1\right).\] 
	
	Since its emergence, numerous researchers have included the concept of \textit{gos} in their work such as recurrence relation of moments, estimation of parameters, distributional properties for various distributions etc. (see \cite{aboeleneen2010inference}, \cite{burkschat2009linear}, \cite{gupta2019inference} and \cite{arshad2021bayesian}  ) This article deals with the moment generating function (mgf) for calculating higher-order moments and Bayesian estimation of parameter of HLG distribution. Moments and recurrence relations grabbed a lot of attention in literature. Recurrence relations are exciting and useful for the computation of higher-order moments. They show a great ability in reducing the amount of time and number of operations necessary to get a general form of the function under the consideration of order. Several authors have derived expressions for moments based on \textit{gos} and their special cases for various continuous distributions. \cite{gupta2019inference} presented relations of moments of WGED based on \textit{gos} setup, \cite{akhter2020moments}  derived the moments and applied the results to the lifetimes of the coherent systems of order statistics from the standard two-sided power distribution. For more detailed reading readers are advised to see \cite{singh2021exact}, \cite{alharbi2021moment}, \cite{khatoon2021moments}.\par
	The estimation of parameters based on \textit{gos} for various lifetime models has been investigated well hitherto. For example, \cite{safi2013generalized} obtained the estimates of the unknown parameters for Kumaraswamy distribution based on \textit{gos} by assuming the maximum likelihood method. In \cite{kim2014bayesian}, Bayesian estimators and highest posterior density credible intervals from Rayleigh distribution of the scale parameter from \textit{gos} setup are obtained. In \cite{azhad2020estimation}, the problem of estimation of common location parameter of several heterogeneous exponential populations is discussed, and various estimators of common location parameter are derived under \textit{gos} setup.\par
	Because of their great applications in almost all the parametric statistical techniques, the statistical distributions play a significant role in inference, survival analysis, lifetime data modeling, and reliability. The fitting of the data by statistical modeling is an essential tool for analyzing the lifetime data. Today's new breed of statisticians is a sleuth who uses modern analytics to find patterns in vast oceans of information. Almost every area generates data that can be mined for valuable patterns. Sometimes, due to some limitations, pre-existing models fail to fit the lifetime data. Several lifetime distributions have been developed in the literature for this purpose. The past decade has witnessed tremendous growth in the development of statistical models. It has seen that the widely used lifetime models usually have a limited range of behaviors. Such types of distributions cannot give a better fit to model all the practical situations. Recently, several authors have developed different families of statistical models by applying various techniques.
	The logistic distribution was first used by \cite{verhulst1838notice,verhulst1845loi} for the growth models with modeling lifetime data, plays a vital role in survival analysis as a parametric model.   \cite{gumbel1944ranges} introduced the genesis of this distribution. It has many applications; for example, mortality from cancer following diagnosis or treatment, steam flow and precipitation (in hydrology), and the distribution of wealth or income (in economics). Having remarkable properties and applications in various areas, some of the researchers argued of the left-hand limit of this distribution approach to negative infinity. They found it inappropriate in many situations of modeling lifetime data, which resulted in developing new models.
	The half logistic distribution has received a considerable attention in the statistical literature. It was obtained by folding the logistic distribution and considered by \cite{balakrishnan1985order} in the context of order statistics. 
	The properties of analyzing lifetime data and the increasing failure rate makes it attention grabbing among researchers. Recently, a lot of  work has been done on the  half logistic distribution; literature consists of \cite{balakrishnan2013recurrence}, \cite{jodra2014logarithmic} listed the properties of order statistics of half logistic families of distribution. In recent work by \cite{alotaibi2020bivariate}, \cite{almarashi2020statistical}, \cite{jeon2020estimation}, \cite{xavier2020study}, and \cite{zhang2020parameter} revealed enormous study on the HL distribution. A random variable $X$ follows half logistic distribution if its pdf is given as
	\begin{equation}
		f(x)=\dfrac{2e^{-x}}{(1+e^{-x})^2}, x>0.
	\end{equation}
	and the corresponding df is given as
	\begin{equation}
		F(x)=\dfrac{(1-e^{-x})}{(1+e^{-x})}, x>0.
	\end{equation}
	Despite having tremendous properties in lifetime modeling, half logistic distribution fails to perform in engineering problems like: to know the exact number of components fail. In these types of situations, \cite{liu2020recurrence}  used a discrete distribution with Half logistic distribution to overcome the drawback of logistic distribution, and introduced a new model. The HLG random variable X possess \textit{df} as
	\begin{equation}\label{cdf.hlg}
		F(x)=\dfrac{\theta(1-e^{-x})}{(\theta+(2-\theta)e^{-x})}, x>0,0<\theta<1.
	\end{equation}
	This new distribution is obtained by mixing half logistic distribution with a geometric distribution. The Half logistic geometric (HLG) distribution has an increasing failure rate for all values of shape parameters. Having one extra parameter makes this distribution very flexible and it provides more flexibility than Half logistic distribution and very useful in modeling data.\\
	The pdf is 
	\begin{equation}\label{pdf.hlg}
		f(x)=\dfrac{2\theta e^{-x}}{(\theta+(2-\theta)e^{-x})^2}, x>0,0<\theta<1.
	\end{equation}
	It can be noticed that 
	\begin{equation}\label{rel.pc}
		f(x)=\bar{F}\left(x\right)-\left(\frac{2-\theta}{2}\right){\bar{F}\left(x\right)}^2.
	\end{equation}
	where $\bar{F}\left(x\right)=1-{{F} \left(x\right)}$.\\
	The relation in \eqref{rel.pc} will further be utilized to establish the recurrence relations for single and product moments of \textit{gos} for HLG distribution.
	In \cite{liu2020recurrence}, the authors have discussed the recursive form for moments of order statistics. To extend the work of \cite{liu2020recurrence}, we have discussed the closed form expression as well as recursive relations for marginal and joint moments for  \textit{gos} based on HLG distribution and also reduce them to single and product moments of order statistics for the same distribution.
	 In this paper, the study clearly provides a more general view of HLG distribution in terms of practical applicability. \\

	The organization of present article is as follows: In Sections [\ref{sec2}] and [\ref{sec:joint}], explicit expressions for marginal and joint mgfs based on \textit{gos} from HLG distribution are derived and after that the results are further reduced to the sub model of \textit{gos}. In Section [\ref{sec:bayes}], approximate Bayes estimators are obtained with the aid of Lindley approximation and Markov chain Monte Carlo method. In Section [\ref{sec:simu}], a simulation study is reported for the sub model of \textit{gos} i.e., order statistics with detailed analysis. In section [\ref{sec:real}], two real data applications are given from the field of demography and reliability. 
	
\section{Marginal moments}\label{sec2}
	In this section, an expression of mgf of marginal moments of HLG distribution is obtained under the setup of \textit{gos}. Let $M_{(r,n,m,k)}(t)$ $=E(e^{tX_{r,n,m,k}})$ be the mgf for $r^{th}$ \textit{gos}.
	\begin{theorem}\label{theo1}
		For a continuous non-negative random variable X distributed as (1.8), with real $m,k$,  $m\ge-1$, $k\ge1$ 
	\end{theorem}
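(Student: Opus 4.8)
The plan is to evaluate $M_{(r,n,m,k)}(t)=\int_{0}^{\infty}e^{tx}\,f_{X(r,n,m,k)}(x)\,dx$ directly from the marginal \textit{gos} density \eqref{gos.cdf}. The first step is to reduce the $x$-dependent part of the integrand to a finite linear combination of terms $e^{tx}[\bar{F}(x)]^{\alpha}$. For $m\neq-1$ this uses the identity $g_{m}(F(x))=\tfrac{1}{m+1}\bigl(1-[\bar{F}(x)]^{m+1}\bigr)$, so that the binomial theorem gives $g_{m}^{r-1}[F(x)]=(m+1)^{-(r-1)}\sum_{i=0}^{r-1}\binom{r-1}{i}(-1)^{i}[\bar{F}(x)]^{i(m+1)}$, combined with the relation \eqref{rel.pc}, $f(x)=\bar{F}(x)-\tfrac{2-\theta}{2}[\bar{F}(x)]^{2}$. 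Multiplying the three factors $[\bar{F}(x)]^{\gamma_{r}-1}$, $f(x)$ and $g_{m}^{r-1}[F(x)]$ produces only the exponents $\alpha=\gamma_{r}+i(m+1)$ and $\alpha=\gamma_{r}+1+i(m+1)$, $i=0,\dots,r-1$, so that $M_{(r,n,m,k)}(t)$ equals $\tfrac{C_{r-1}}{(r-1)!\,(m+1)^{r-1}}$ times a finite linear combination of integrals $I(\alpha,t)=\int_{0}^{\infty}e^{tx}[\bar{F}(x)]^{\alpha}\,dx$.

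The heart of the argument is a closed-form evaluation of $I(\alpha,t)$. Since $\bar{F}(x)=\dfrac{2e^{-x}}{\theta+(2-\theta)e^{-x}}$, I would write the denominator as $\theta+(2-\theta)e^{-x}=2\bigl(1-\tfrac{2-\theta}{2}(1-e^{-x})\bigr)$; because $0<\theta<1$ and $0\le 1-e^{-x}<1$, the quantity $\tfrac{2-\theta}{2}(1-e^{-x})$ stays in $[0,1)$, which makes the generalized binomial expansion of $\bigl(1-\tfrac{2-\theta}{2}(1-e^{-x})\bigr)^{-\alpha}$ convergent. Expanding that series, then expanding each $(1-e^{-x})^{j}$ by the finite binomial theorem, and integrating term by term via $\int_{0}^{\infty}e^{(t-\alpha-p)x}\,dx=\dfrac{1}{\alpha+p-t}$ yields a double-series closed form for $I(\alpha,t)$ (equivalently a Gauss hypergeometric representation, which is convenient for later extraction of moments). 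Reassembling gives the asserted formula for $M_{(r,n,m,k)}(t)$, valid whenever $t<\gamma_{r}$: this is exactly the condition ensuring $\alpha+p-t>0$ for every exponent $\alpha$ that appears, hence convergence of the $x$-integral at $+\infty$ (there is no issue at $0^{+}$, since $\bar{F}(0)=1$).

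The main obstacle is justifying the interchange of the infinite summation with the integration. This follows from the uniform bound $\tfrac{2-\theta}{2}(1-e^{-x})\le\tfrac{2-\theta}{2}<1$ on $(0,\infty)$ together with $t<\gamma_{r}$, after which Fubini's theorem applies to the resulting absolutely convergent double series. The remaining work is purely combinatorial bookkeeping of the finitely many exponents $\alpha$ created by the product of $[\bar{F}]^{\gamma_{r}-1}$, the expansion of $g_{m}^{r-1}$, and the two-term form of $f$. Finally, the value $m=-1$ (the $k$-record model) is not covered by the polynomial manipulation above, since $g_{-1}(F(x))=-\ln\bar{F}(x)$; it is handled either by taking the limit $m\to-1$ in the final expression or by repeating the computation with $[-\ln\bar{F}(x)]^{r-1}$, applying the same substitution and the same convergent series to evaluate $\int_{0}^{\infty}e^{(t-\gamma_{r})x}[-\ln\bar{F}(x)]^{r-1}\,dx$. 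I would present the $m\neq-1$ case in full and record the $m=-1$ version separately.
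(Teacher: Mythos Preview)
Your computation is correct in principle, but it takes a noticeably longer path than the paper's. After the common first step---the binomial expansion of $g_{m}^{r-1}[F(x)]$ that reduces everything to integrals of the form $\int_{0}^{\infty}e^{tx}[\bar{F}(x)]^{\gamma_{r-u}-1}f(x)\,dx$---the paper does \emph{not} invoke \eqref{rel.pc} and does not expand anything as a series. Instead it performs a single change of variable $z=\tfrac{2-\theta}{2}\,\bar{F}(x)$: since $d\bar{F}=-f(x)\,dx$ and $e^{x}=\tfrac{(2-\theta)(1-z)}{\theta z}$ under this substitution, the integral becomes immediately
\[
\left(\frac{2}{2-\theta}\right)^{\gamma_{r-u}}\left(\frac{\theta}{2-\theta}\right)^{-t}\int_{0}^{1-\theta/2} z^{\gamma_{r-u}-t-1}(1-z)^{t}\,dz,
\]
which is precisely the incomplete beta function appearing in \eqref{mgf.gos}. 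Your route---eliminating $f$ via \eqref{rel.pc}, then series-expanding $\bigl(1-\tfrac{2-\theta}{2}(1-e^{-x})\bigr)^{-\alpha}$ and integrating term by term---yields a double series that you would still have to \emph{recognize} as this incomplete beta (or its ${}_{2}F_{1}$ form), and it requires the Fubini justification you mention, which the direct substitution avoids entirely. In the paper, the relation \eqref{rel.pc} is reserved for the recurrence (Theorem~\ref{theo2}), not for the closed form here. The practical lesson: keep $f(x)\,dx$ intact and substitute for $\bar{F}$, rather than first dissolving $f$ into powers of $\bar{F}$.
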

	\begin{eqnarray} \label{mgf.gos}
		M_{(r,n,m,k)}(t)=\dfrac{c_{r-1}}{(r-1)!(m+1)^{r-1}}\sum_{u=0}^{r-1}(-1)^{u}\binom{r-1}{u}\left(\frac{2}{2-\theta}\right)^{\gamma_{r-u}}\left(\frac{\theta}{2-\theta}\right)^{-t} \nonumber\\
		\, \, \, \, \times B\left(1-\frac{\theta}{2},\gamma_{r-u}-t,1+t\right),\,\,\,\,m\ne-1.\,\,\,\,\,\,\,\,
	\end{eqnarray}
	\begin{proof}
		By using \eqref{gos.cdf} the \textit{mgf} of \textit{gos} can be seen as
		\begin{equation} 
			M_{(r,n,m,k)}(t)=\frac{c_{r-1}}{(r-1)!}\int_{0}^\infty e^{tx}\left[\bar{F}\left(x\right)\right]^{\gamma _{r} -1} g_{m}^{r-1} \left(F\left(x\right)\right)f\left(x\right)dx.
		\end{equation}
		Expanding $g_{m}^{r-1}[F(x)]$ binomially in (2.3), we get
		\begin{equation} 
			M_{r:n,m,k} =\frac{C_{r-1} }{\left(r-1\right)!(m+1)^{r-1} }  \sum_{u=0}^{r-1}\binom{r-1}{u}(-1)^{u} \int _{0}^{\infty }e^{tx} \left[\bar{F}\left(x\right)\right]^{\gamma _{r-u} -1} f\left(x\right)dx  .  
		\end{equation}
		Making appropriate substitution by considering \eqref{cdf.hlg} and \eqref{pdf.hlg}, and after simplification, we get \eqref{mgf.gos}. To show the existence of $\mu_{(r,n,m,k)}=E\left(X_{r,n,n,k}\right)$, we proceed as
		\begin{equation}
			\mu_{(r,n,m,k)}=\frac{d}{dt}M_{(r,n,m,k)}(t)\Bigg|_{t=0}\nonumber
		\end{equation}
		\begin{eqnarray}\label{momet1.gos}
			\mu_{(r,n,m,k)}=\dfrac{C_{r-1}}{(r-1)!(m+1)^{r-1}}\sum_{u=0}^{r-1}(-1)^{u}\binom{r-1}{u}\left(\dfrac{2}{2-\theta}\right) \left[ln\left(\dfrac{\theta}{2-\theta}\right)-\mathcal{A}^{'}\right],
		\end{eqnarray} 
		where,
		\begin{equation}
			\mathcal{A}^{'}={\frac{d}{dt}}B\left[1-\frac{1}{\theta},\gamma_{r-u}-t,1+t\right]\Bigg|_{t=0}\nonumber
		\end{equation}
		Similarly, the second moment of \textit{gos} from HLG distribution can be obtained as
		\begin{equation}
			\mu_{(r,n,m,k)}^{2}=\frac{d^{2}}{dt^{2}}M_{(r,n,m,k)}(t)\Bigg|_{t=0}\nonumber
		\end{equation}
		\begin{eqnarray}\label{moment2.gos}
			\mu_{(r,n,m,k)}^{2}=\dfrac{C_{r-1}}{(r-1)!(m+1)^{r-1}}\sum_{u=0}^{r-1}(-1)^{u}\binom{r-1}{u}\left(\dfrac{2}{2-\theta}\right)\nonumber\\
			\times\left[ln\left(\dfrac{\theta}{2-\theta}\right)^{2}B\left(1-\frac{1}{\theta},\gamma_{r-u}-t,1+t\right)-2ln\left(\dfrac{\theta}{2-\theta}\right)\mathcal{A}^{'}-\mathcal{A}^{''}\right],
		\end{eqnarray}
		where, 
		\begin{equation}
			\mathcal{A}^{''}={\frac{d^{2}}{dt^{2}}}B\left[1-\frac{1}{\theta},\gamma_{r-u}-t,1+t\right]\Bigg|_{t=0}\nonumber
		\end{equation}
		Proceeding in similar way, \eqref{mgf.gos} can produce $p$-th moment of HLG distribution based on \textit{gos}. 
		\end{proof}
\noindent The results provided in the Theorem [\ref{theo1}] can easily by reduced to the submodels of \textit{gos}. In this direction, we have reduced the results to order statistics by considering $m=0$ and $k=1$ in \eqref{mgf.gos} and obtained marginal mgf for order statistics from HLG distribution as
	\begin{eqnarray}
	M_{r:n}(t)=C_{r,n}\sum_{u=0}^{r-1}(-1)^{u}\binom{r-1}{u}{\left(\frac{2}{2-\theta}\right)}^{(n-r+u+1)}{\left(\frac{\theta}{2-\theta}\right)}^{-t} B{\left(1-\frac{\theta}{2},n-r+u+1-t,1+t\right)}.\nonumber
\end{eqnarray}
where $C_{r,n}=\dfrac{n!}{(r-1)!(n-r)!}$. For $r=1$ and $n=1$, result coincides with \cite{liu2020recurrence} as
 			\begin{equation}
			M(t)={\left(\frac{2}{2-\theta}\right)}{\left(\frac{\theta}{2-\theta}\right)}^{-t}B{\left(1-\frac{\theta}{2},1-t,1+t\right)}.\nonumber\\
		\end{equation}
	
\noindent Also, putting $m=0$ and $k=1$ in \eqref{momet1.gos} and \eqref{moment2.gos}, we obtain first and second moments of order statistics of HLG distribution based on \textit{gos} setup. For $r=1$ and $n=1$, the results agrees with \cite{liu2020recurrence}. The numerical results of mean for order statistics for various configurations are obtained by \cite{liu2020recurrence}.
		
	\begin{theorem}\label{theo2}
		For integers $r\ge1$, $p\ge1$ and under the assumption of Theorem \ref{theo1}, the given recurrence relation is satisfied.
		\begin{eqnarray}\label{reccr.mgf}
			M_{r,n,m,k}^{p}(t)=\dfrac{{\gamma_{r}}}{(p+1)}\left[M_{r,n,m,k}^{p+1}(t)-M_{r-1,n,m,k}^{p+1}(t)\right]-\left(\dfrac{2}{2-\theta}\right)\dfrac{\gamma_{r}^{n+1,k-m}}{(p+1)}\nonumber \\ \,\,\,\,\,\, \times \left[M_{r,n+1,m,k-m}^{p+1}(t)-M_{r-1,n+1,m,k-m}^{p+1}(t)\right]- \dfrac{t}{(p+1)}M_{r,n,m,k}^{p+1}(t),\,\,\,\,\,\,\, m\ne-1\
		\end{eqnarray}
	\end{theorem}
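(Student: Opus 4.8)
The plan is to derive \eqref{reccr.mgf} from two integrations by parts driven by the HLG identity \eqref{rel.pc}. Starting from the density \eqref{gos.cdf}, one has
\[
M^{p}_{r,n,m,k}(t)=\frac{C_{r-1}}{(r-1)!}\int_{0}^{\infty}x^{p}e^{tx}\,[\bar{F}(x)]^{\gamma_{r}-1}f(x)\,g_{m}^{r-1}(F(x))\,dx ,
\]
and \eqref{rel.pc}, written as $[\bar{F}]^{\gamma_{r}-1}f=[\bar{F}]^{\gamma_{r}}-\frac{2-\theta}{2}[\bar{F}]^{\gamma_{r}+1}$, splits this into a $[\bar{F}]^{\gamma_{r}}$–part and a $[\bar{F}]^{\gamma_{r}+1}$–part. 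The mechanism of the proof is that each of $[\bar{F}]^{\gamma_{r}}g_{m}^{r-1}(F)$ and $[\bar{F}]^{\gamma_{r}+1}g_{m}^{r-1}(F)$ is, up to a constant, a primitive of a difference of two consecutive gos densities: differentiating and using $h_{m}'(F)=[\bar{F}]^{m}$, $\gamma_{r-1}=\gamma_{r}+m+1$ and $C_{r-1}=\gamma_{r}C_{r-2}$ gives
\[
\frac{d}{dx}\bigl\{[\bar{F}(x)]^{\gamma_{r}}g_{m}^{r-1}(F(x))\bigr\}=-\frac{\gamma_{r}(r-1)!}{C_{r-1}}\bigl(f_{X(r,n,m,k)}(x)-f_{X(r-1,n,m,k)}(x)\bigr),
\]
and, running the same computation in the family $(n+1,m,k-m)$, in which $\gamma_{r}^{n+1,k-m}=\gamma_{r}+1$,
\[
\frac{d}{dx}\bigl\{[\bar{F}(x)]^{\gamma_{r}+1}g_{m}^{r-1}(F(x))\bigr\}=-\frac{\gamma_{r}^{n+1,k-m}(r-1)!}{C_{r-1}^{n+1,k-m}}\bigl(f_{X(r,n+1,m,k-m)}(x)-f_{X(r-1,n+1,m,k-m)}(x)\bigr).
\]

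Next I would integrate $x^{p+1}e^{tx}$ against each of the two displayed identities, integrating by parts with $\frac{d}{dx}(x^{p+1}e^{tx})=(p+1)x^{p}e^{tx}+t\,x^{p+1}e^{tx}$. The boundary terms vanish — at $x=0$ because $x^{p+1}\to 0$ and $g_{m}^{r-1}(F(0))=g_{m}^{r-1}(0)=0$ when $r\ge 2$, and at $x=\infty$ because $\bar{F}(x)$ decays exponentially for the HLG law while $t$ is restricted to the interval on which the moment generating functions in play are finite. The first identity thereby expresses $M^{p+1}_{r,n,m,k}(t)-M^{p+1}_{r-1,n,m,k}(t)$ through $\int x^{p}e^{tx}[\bar{F}]^{\gamma_{r}}g_{m}^{r-1}(F)\,dx$ and $\int x^{p+1}e^{tx}[\bar{F}]^{\gamma_{r}}g_{m}^{r-1}(F)\,dx$, and the second expresses $M^{p+1}_{r,n+1,m,k-m}(t)-M^{p+1}_{r-1,n+1,m,k-m}(t)$ through the corresponding integrals of $[\bar{F}]^{\gamma_{r}+1}g_{m}^{r-1}(F)$. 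Forming the first combination minus $\frac{2-\theta}{2}$ times the second reconstitutes $M^{p}_{r,n,m,k}(t)$ out of the $x^{p}e^{tx}$–integrals and, via \eqref{rel.pc} once more, reconstitutes $M^{p+1}_{r,n,m,k}(t)$ out of the $x^{p+1}e^{tx}$–integrals; solving the resulting linear relation for $M^{p}_{r,n,m,k}(t)$ produces \eqref{reccr.mgf}.

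The genuine difficulty is not the analysis (two integrations by parts and one algebraic substitution) but the bookkeeping: one must carry the normalising constants $C_{r-1}$ carefully through the three gos families that appear — $(n,m,k)$, $(n+1,m,k-m)$ and the index shift $r\mapsto r-1$ — and keep straight the elementary relations $\gamma_{r-1}=\gamma_{r}+m+1$, $\gamma_{r}^{n+1,k-m}=\gamma_{r}+1$ and $C_{r-1}=\gamma_{r}C_{r-2}$, since it is precisely this bookkeeping that fixes the coefficients $\gamma_{r}$, $\frac{2}{2-\theta}\,\gamma_{r}^{n+1,k-m}$ and $t$ (each over $p+1$) in their stated places. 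One should also note explicitly that \eqref{reccr.mgf} is an identity of functions valid on the common interval of convergence of the moment generating functions occurring in it — the HLG mgf and all its gos are finite in a neighbourhood of $t=0$ — which is what licenses discarding the boundary terms above.
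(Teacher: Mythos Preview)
Your proof is correct and follows essentially the same route as the paper: use the HLG identity \eqref{rel.pc} to replace $[\bar{F}]^{\gamma_{r}-1}f$ by $[\bar{F}]^{\gamma_{r}}-\frac{2-\theta}{2}[\bar{F}]^{\gamma_{r}+1}$, then integrate by parts via the derivative of $[\bar{F}]^{\gamma_{r}}g_{m}^{r-1}(F)$ (and its $(n+1,m,k-m)$ analogue) to produce differences of consecutive gos densities, with the same bookkeeping of $\gamma_{r-1}=\gamma_{r}+m+1$, $\gamma_{r}^{n+1,k-m}=\gamma_{r}+1$ and $C_{r-1}=\gamma_{r}C_{r-2}$. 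The only cosmetic difference is the order of operations: the paper first integrates $e^{tx}$ by parts to obtain the base mgf identity $tM_{r,n,m,k}(t)=\gamma_{r}[M_{r,n,m,k}(t)-M_{r-1,n,m,k}(t)]-\frac{2-\theta}{2}\gamma_{r}^{n+1,k-m}C^{*}[M_{r,n+1,m,k-m}(t)-M_{r-1,n+1,m,k-m}(t)]$ and then differentiates $(p+1)$ times in $t$, whereas you carry the factor $x^{p+1}$ through the integration by parts from the outset --- the two computations are interchangeable by differentiation under the integral sign.
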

	\begin{proof}
		In view of \eqref{gos.cdf}, we have 
		\begin{equation}\label{2.7}
			M_{r,n,m,k}(t)=\dfrac{c_{r-1}}{(r-1)!}\int_{0}^\infty{e^{tx}}\left[\bar{F}\left(x\right)\right]^{\gamma _{r} -1}g_{m}^{r-1}\left(F(x)\right) f\left(x\right)dx.
		\end{equation}
		Making use of \eqref{rel.pc} in \eqref{2.7}, we have
		\begin{align}\label{2.8}
			M_{r,n,m,k}(t)&=\dfrac{c_{r-1}}{(r-1)!}\int_{0}^\infty{e^{tx}}\left[\bar{F}\left(x\right)\right]^{\gamma _{r}}g_{m}^{r-1}\left(F(x)\right)dx
			-{\left(\dfrac{2-\theta}{2}\right)}\dfrac{c_{r-1}}{(r-1)!}\nonumber\\
			&\times\int_{0}^\infty{e^{tx}}\left[\bar{F}\left(x\right)\right]^{\gamma _{r} +1}g_{m}^{r-1}\left(F(x)\right)dx.\nonumber \\
			&=Z(x)-{\left(\dfrac{2-\theta}{2}\right)}Z^{*}(x)
		\end{align}
		where,
		\begin{equation}
			Z(x)=\dfrac{c_{r-1}}{(r-1)!}\int_{0}^\infty{e^{tx}}\left[\bar{F}\left(x\right)\right]^{\gamma _{r}}g_{m}^{r-1}(F(x)dx\nonumber
		\end{equation}
		and 
		\begin{equation}
			Z^{*}(x)=\dfrac{c_{r-1}}{(r-1)! }\int_{0}^\infty{e^{tx}}\left[\bar{F}\left(x\right)\right]^{\gamma _{r}+1}g_{m}^{r-1}F(x)dx\nonumber
		\end{equation}
		After integrating $Z(x)$ by parts, we obtain
		
		\begin{equation}
			Z(x)=\dfrac{\gamma_{r}}{t}\left[M_{r,n,m,k}(t)-M_{r-1,n,m,k}(t)\right]\nonumber
		\end{equation}
		In similar way, integrating $Z^{*}(x)$, we obtain
		\begin{equation}
			Z^{*}(x)=\dfrac{\gamma_{r}^{n+1,k-m}C^{*}}{t}\left[M_{r,n+1,m,k-m}(t)-M_{r-1,n+1,m,k-m}(t)\right]\nonumber
		\end{equation}
		where $C^{*}=\dfrac{c_{r-1}}{c_{r-1}^{n+1,k-m}}.$\\
		Substituting $Z(x)$ and $Z^{*}(x)$ in \eqref{2.8}, we obtain\\
		\begin{eqnarray}\label{mgf.reccr}
			tM_{r,n,m,k}(t)={\gamma_{r}}\left[M_{r,n,m,k}(t)-M_{r-1,n,m,k}(t)\right]-\left(\dfrac{2-\theta}{2}\right){\gamma_{r}^{n+1,k-m}C^{*}}\nonumber \\ \,\,\,\,\,\, \times \left[M_{r,n+1,m,k-m}{(t)}-M_{r-1,n+1,m,k-m}(t)\right].
		\end{eqnarray}
		Differentiating \eqref{mgf.reccr} $(p+1)$ times w.r.t. $t$, we obtain \eqref{reccr.mgf}.\\
		We get the recurrence relation for single moments from HLG distribution based on $\textit{gos}$ by taking $t=0$ in \eqref{reccr.mgf}, as
		\begin{eqnarray}\label{2.10}
			\mu_{r,n+1,m,k-m}^{p+1}=-{\left(\dfrac{2}{2-\theta}\right)}\dfrac{\gamma_{r}}{C^{*}\gamma_{r}^{k-m,n+1}}\left[\dfrac{(p+1)}{\gamma_{r}}\mu_{r,n,m,k}^{p}-\mu_{r,n,m,k}^{p+1}+\mu_{r-1,n,m,k}^{p+1}\right]\nonumber\\
			+\mu_{r-1,n+1,m,k-m}^{p+1},\,\,\,\,\,\,\, m\ne-1.\
		\end{eqnarray}
	\end{proof}
\noindent The results provided in the Theorem [\ref{theo2}] are reduced to order statistics by putting $m=0$ and $k=1$ in \eqref{mgf.reccr}. Thus, recurrence relation for marginal mgf for order statistics from HLG distribution is 
		\begin{eqnarray}
			tM_{r:n}(t)={(n-r+1)}\left[M_{r:n}(t)-M_{r-1:n}(t)\right]-\left(\dfrac{2-\theta}{2}\right)\dfrac{\left(n-r+2\right)\left(n-r+1\right)}{(n+1)}\nonumber \\ \,\,\,\,\,\,  \times\left[M_{r:n+1}{(t)}-M_{r-1:n+1}(t)\right].
		\end{eqnarray}
\noindent	 For $m=0$ and $k=1$ in \eqref{2.10}, we obtain the recurrence relation for order statistics from HLG distribution which coincides with \cite{liu2020recurrence} as 
		\begin{eqnarray}
			\mu_{r:n+1}^{p+1}=-{\left(\dfrac{2}{2-\theta}\right)}\dfrac{(n+1)}{(n-r+2)}\left[\dfrac{(p+1)}{(n-r+1)}\mu_{r:n}^{p}-\mu_{r:n}^{p+1}+\mu_{r-1,n}^{p+1}\right]+\mu_{r-1:n+1}^{p+1}.\nonumber
		\end{eqnarray}

\section{Joint Moments} \label{sec:joint}
	\begin{theorem}\label{mgf.theo}
		For $1\leq r<s\leq n$, $n \in{N}$, $m \in {R}$, and a fixed positive integer $k\ge 1$	
	\end{theorem}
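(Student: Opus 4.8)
The plan is to run the argument of Theorem~\ref{theo1} one dimension higher, starting from the joint density \eqref{jointpdf.gos} in place of \eqref{gos.cdf}. Writing $M_{(r,s,n,m,k)}(t_1,t_2)=E\!\left(e^{t_1X(r,n,m,k)+t_2X(s,n,m,k)}\right)$ and inserting \eqref{jointpdf.gos}, the object to be evaluated is
\begin{equation*}
	\frac{C_{s-1}}{(r-1)!(s-r-1)!}\int_{0}^{\infty}\!\!\int_{x}^{\infty}e^{t_1x+t_2y}\,[\bar F(x)]^{m}f(x)\,g_{m}^{r-1}[F(x)]\,[h_{m}F(y)-h_{m}F(x)]^{s-r-1}[\bar F(y)]^{\gamma_{s}-1}f(y)\,dy\,dx .
\end{equation*}
Since $m\neq-1$, I would rewrite $g_{m}[F(x)]=\tfrac{1}{m+1}\bigl(1-[\bar F(x)]^{m+1}\bigr)$ and $h_{m}F(y)-h_{m}F(x)=\tfrac{1}{m+1}\bigl([\bar F(x)]^{m+1}-[\bar F(y)]^{m+1}\bigr)$, and then expand \emph{both} $g_{m}^{r-1}[F(x)]$ and $[h_{m}F(y)-h_{m}F(x)]^{s-r-1}$ by the binomial theorem. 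This turns the integrand into a finite double sum over $u=0,\dots,r-1$ and $v=0,\dots,s-r-1$ in which the $x$- and $y$-factors separate into pure powers $[\bar F(x)]^{a}f(x)$ and $[\bar F(y)]^{b}f(y)$, with $a=m+(m+1)(u+s-r-1-v)$ and $b=\gamma_{s}-1+(m+1)v$, the coupling surviving only in the domain $\{x<y\}$; the prefactor picks up a factor $(m+1)^{-(s-2)}$ and the sign $(-1)^{u+v}\binom{r-1}{u}\binom{s-r-1}{v}$.

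Next, exactly as in the marginal case, I would feed \eqref{cdf.hlg}--\eqref{pdf.hlg} through the monotone substitutions $w_{1}=\bigl(1-\tfrac{\theta}{2}\bigr)\bar F(x)$ and $w_{2}=\bigl(1-\tfrac{\theta}{2}\bigr)\bar F(y)$ (equivalently, $e^{-x},e^{-y}$ first and then these). Monotonicity sends $\{0<x<y\}$ onto the triangle $\{0<w_{2}<w_{1}<1-\tfrac{\theta}{2}\}$, and each one-variable factor reduces precisely as it did in Theorem~\ref{theo1}, so that the $(u,v)$-summand collapses to
\begin{equation*}
	\Bigl(\tfrac{2}{2-\theta}\Bigr)^{a+1}\Bigl(\tfrac{2}{2-\theta}\Bigr)^{b+1}\Bigl(\tfrac{\theta}{2-\theta}\Bigr)^{-(t_1+t_2)}\int_{0}^{1-\theta/2}w_{1}^{\,a-t_1}(1-w_{1})^{t_1}\Bigl(\int_{0}^{w_{1}}w_{2}^{\,b-t_2}(1-w_{2})^{t_2}\,dw_{2}\Bigr)dw_{1}.
\end{equation*}

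The one step that is genuinely harder than in the marginal proof is this coupled iterated integral: because $w_{2}<w_{1}$, it does not factor into a product of two complete beta integrals. I would handle it by identifying the inner integral with the incomplete beta function $B(w_{1},\,b+1-t_2,\,1+t_2)$ and then either (i) recording the answer directly as a one-dimensional incomplete-beta-weighted integral over $(0,1-\tfrac{\theta}{2})$ (equivalently, as an Appell-type hypergeometric function), or (ii) to make everything explicit, expanding $(1-w_{2})^{t_2}$ by the generalized binomial theorem (legitimate since $0\le w_{2}<1-\tfrac{\theta}{2}<1$), integrating term by term to get $B(w_{1},b+1-t_2,1+t_2)=\sum_{j\ge0}\binom{t_2}{j}(-1)^{j}\frac{w_{1}^{\,b+1-t_2+j}}{b+1-t_2+j}$, and integrating once more against $w_{1}^{\,a-t_1}(1-w_{1})^{t_1}$ to land on the absolutely convergent series of complete incomplete-beta values $B\!\left(1-\tfrac{\theta}{2},\,a+b+2-t_1-t_2+j,\,1+t_1\right)$. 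Reinstating $C_{s-1}$, the binomials, the signs, the factor $(m+1)^{-(s-2)}$, and resumming over $u,v$ gives the expression asserted in Theorem~\ref{mgf.theo}. The joint moments $\mu^{(p,q)}_{(r,s,n,m,k)}$ then follow, as in Theorem~\ref{theo1}, by differentiating this series $p$ times in $t_1$ and $q$ times in $t_2$ and setting $t_1=t_2=0$, the termwise differentiation being justified by uniform convergence on a neighbourhood of the origin.
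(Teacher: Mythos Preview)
Your overall strategy matches the paper's: start from \eqref{jointpdf.gos}, expand both $g_{m}^{r-1}[F(x)]$ and $[h_mF(y)-h_mF(x)]^{s-r-1}$ binomially (picking up the factor $(m+1)^{-(s-2)}$ and the sums over $u,v$), substitute so that the inner integral over $y$ becomes an incomplete beta integral, and then perform the outer integral. Up to this point you and the paper are doing the same thing.

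The divergence is in how you convert the inner incomplete beta integral into a series. You propose expanding $(1-w_2)^{t_2}$ by the generalized binomial theorem, which yields a series $\sum_{j\ge0}(-1)^{j}\binom{t_2}{j}\frac{w_1^{\,b+1-t_2+j}}{b+1-t_2+j}$ and, after the outer integration, incomplete betas of the shape $B\!\left(1-\tfrac{\theta}{2},\,\gamma_{r-u}-t_1-t_2+j,\,1+t_1\right)$. That is a perfectly correct series representation of $M_{(r,s,n,m,k)}(t_1,t_2)$, but it is \emph{not} the formula \eqref{mgf.joint} that the theorem asserts: the stated result has Pochhammer coefficients $\dfrac{(\gamma_{s-v}+1)_c(1)_c}{(\gamma_{s-v}-t_2+1)_c}$ and incomplete betas with third argument $t_1+t_2+2$, not binomial coefficients and third argument $1+t_1$. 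These are two different expansions of the same function, not termwise equal, so your final claim that this ``gives the expression asserted'' does not follow.

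What the paper does at that step is apply Dutka's identity
\[
\int_0^{x}u^{a-1}(1-u)^{b-1}\,du=\frac{x^{a}(1-x)^{b}}{a}\,{}_2F_1(a+b,1;a+1;x)
\]
to the inner integral $I(x)$, producing a factor $\bigl[1-\tfrac{2-\theta}{2}\bar F(x)\bigr]^{t_2+1}$ together with ${}_2F_1\!\bigl(\gamma_{s-v}+1,1;\gamma_{s-v}-t_2+1;\tfrac{2-\theta}{2}\bar F(x)\bigr)$. Expanding this Gauss hypergeometric as $\sum_{c\ge0}\frac{(\gamma_{s-v}+1)_c(1)_c}{(\gamma_{s-v}-t_2+1)_c\,c!}\bigl(\tfrac{2-\theta}{2}\bar F(x)\bigr)^{c}$ is exactly what manufactures the Pochhammer symbols in \eqref{mgf.joint}; the extra factor $(1-w_1)^{t_2+1}$ is then absorbed into the outer integrand, which is why the final incomplete beta carries third argument $t_1+t_2+2$ rather than $1+t_1$. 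If you replace your option~(ii) by this ${}_2F_1$ expansion, the rest of your argument goes through verbatim and reproduces \eqref{mgf.joint}.
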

	
	\begin{eqnarray} \label{mgf.joint}
		M_{r,s,n,m,k}(t_{1},t_{2}) =\dfrac{C_{s-1} }{\left(r-1\right)!\left(s-r-1\right)!(m+1)^{s-2}}\sum_{c=0}^{\infty}\sum_{u=0}^{r-1}\sum_{v=0}^{s-r-1}\binom{s-r-1}{v}\nonumber\\
		\times\binom{r-1}{u}(-1)^{u+v}\left(\frac{2}{2-\theta}\right)^{\gamma_{r-u}+t_{1}}\left(\frac{\theta}{2-\theta}\right)^{-(t_{1}+t_{2})}\dfrac{\left(\gamma_{s-v}+1\right)_{c}(1)_{c}}{\left(\gamma_{s-v}-t_{2}+1\right)_{c}\left(\gamma_{s-v}-t_{2}\right){c}!}\nonumber\\
		\times B\left(1-\frac{\theta}{2},\gamma_{r-u}-t_{2}+c,t_{1}+t_{2}+2\right),\text{for} \, m\neq {-1}.
	\end{eqnarray}
	\begin{proof}
		Making use of \eqref{jointpdf.gos}, we obtain the joint mgf from \textit{gos} as
		\begin{eqnarray}  
			M_{\left(r,s,n,m,k\right)}({t_{1},t_{2}}) =\frac{C_{s-1} }{\left(r-1\right)!\left(s-r-1\right)!}\int_{0}^{\infty}\int_{x}^{\infty} e^{t_{1}x+t_{2}y  }\left[\bar{F}\left(x\right)\right]^{m} f\left(x\right)g_{m}^{r-1} \left(F\left(x\right)\right) \nonumber \nonumber \\
			\times \left[h_{m} F\left(y\right)-h_{m} F\left(x\right)\right]^{s-r-1}\left[\bar{F}\left(y\right)\right]^{\gamma _{s} -1} f\left(y\right)dydx.\nonumber
		\end{eqnarray}
		Simplifying above expression by using binomial expansion of $g_{m}^{r-1}   \left(F\left(x\right)\right)$ \text{and} $\left[h_{m} F\left(y\right)-h_{m} F\left(x\right)\right]^{s-r-1}$, we obtain
		\begin{eqnarray}
			M_{\left(r,s,n,m,k\right)}({t_{1},t_{2}}) =\frac{C_{s-1} }{\left(r-1\right)!\left(s-r-1\right)!(m+1)^s-2}\sum_{c=0}^{\infty}\sum_{u=0}^{r-1}\sum_{v=0}^{s-r-1}\binom{s-r-1}{v}\binom{r-1}{u}\nonumber\\
			\times\int_{0}^{\infty} e^{t_{1}x}\left[\bar{F}\left(x\right)\right]^{(s-r+u-v)(m+1)-1}f(x)I(x)dx.
		\end{eqnarray}
		where,
		\begin{equation}
			I(x)=\int_{x}^{\infty}e^{t_{2}y}\left[\bar{F}\left(y\right)\right]^{\gamma _{s-v} -1} f\left(y\right)dy.\nonumber
		\end{equation}
		In view of \eqref{cdf.hlg} and \eqref{pdf.hlg} and making appropriate substitution in $I(x)$, we obtain
		\begin{eqnarray}
			I(x)=\theta^{-t_{2}}2^{t_{2}}\left(\dfrac{2}{2-\theta}\right)^{\gamma_{s-v}-t_{2}}\int_{0}^{\left(\dfrac{2-\theta}{2}\right)\bar{F}\left(x\right)}
			{z^{\gamma_{s-v}-t_{2}-1}}(1-z)^{t_{2}}dz.
		\end{eqnarray}
		In view of the following integral used by \cite{dutka1981incomplete},
		\begin{equation}
			\int_{0}^{x}u^{a-1}(1-u)^{b-1}du=\dfrac{x^{a}(1-x)^b}{a}{}_{2}F_{1}(a+b,1;a+1;x)
		\end{equation}
		Here, $ {}_{2}F_{1}\left(\alpha,\beta,\gamma,z\right)$ is Gauss Hypergeometric function defined as
		\begin{equation}
			{}_{2}F_{1}\left(\alpha,\beta,\gamma,x\right)=\sum_{z=0}^{\infty}\dfrac{(\alpha)_{z}(\beta)_{z}}{(\gamma)_{z}}\frac{x^{z}}{z!}.\nonumber
		\end{equation}
		simplifying $I(x)$ by using (3.4) and we reduce (3.2) as
		\begin{eqnarray}  
			M_{\left(r,s,n,m,k\right)}({t_{1},t_{2}})= \dfrac{C_{s-1} }{\left(r-1\right)!\left(s-r-1\right)!(m+1)^{s-2}}\sum_{c=0}^{\infty}\sum_{u=0}^{r-1}\sum_{v=0}^{s-r-1}\binom{s-r-1}{v}\nonumber\\
			\times\binom{r-1}{u}(-1)^{u+v}\dfrac{\theta^{-t_{2}}2^{t_{2}}}{\left(\gamma_{s-v}-t_{2}\right)}\int_{0}^{\infty}e^{t_{1}x}\left[\bar{F}\left(x\right)\right]^{\gamma_{s-v}-t_{2}+(s-r+u-v)(m+1)-1}\nonumber\\
			\left[1-\left(\frac{2-\theta}{2}\right)\bar{F}\left(x\right)\right]^{t_{2}+1}{}_{2}F_{1}\left(\gamma_{s-v}+1,1;\gamma_{s-v}-t_{2}+1;\left(\dfrac{2-\theta}{2}\right)\bar{F}\left(x\right)\right)
		\end{eqnarray}
		Solving (3.5) by expanding ${}_{2}F_{1}\left(\gamma_{s-v}+1,1;\gamma_{s-v}-t_{2}+1;\left(\dfrac{2-\theta}{2}\right)\bar{F}\left(x\right)\right)$and simplifying the resultant, we obtain \eqref{mgf.joint}.
	\end{proof}
\noindent For $m=0$ and $k=1$ in \eqref{mgf.joint}, the expression for joint mgf for order statistics from HLG distribution is obtained as
		\begin{eqnarray} 
			M_{r,s:n}(t_{1},t_{2}) =C_{r,s,n}\sum_{c=0}^{\infty}\sum_{u=0}^{r-1}\sum_{v=0}^{s-r-1}\binom{s-r-1}{v}
			\binom{r-1}{u}(-1)^{u+v}\left(\frac{2}{2-\theta}\right)^{n-r+u+1-t_{1}}\nonumber\\
			\times\left(\frac{\theta}{2-\theta}\right)^{-t_{1}-t_{2}}\dfrac{\left({n-s+v+2}\right)_{c}(1)_{c}}{\left(n-s+v-t_{2}+2\right)_{c}\left(n-s+v-t_{2}+1\right){c}!}\nonumber\\ \times B\left(1-\frac{\theta}{2},n-r+u+c-(t_{1}+t_{2}),(t_{1}+t_{2}+2)\right)\nonumber
		\end{eqnarray}

	\begin{theorem}\label{theo3.2}
		For the distribution given in (1.8) and under the assumptions of Theorem \ref{mgf.theo}, the following recurrence relation is satisfied
		\begin{eqnarray}
			M_{r,s,n,m,k}^{p,q}(t_{1},t_{2})=\dfrac{\gamma_{s}}{(q+1)}\left[M_{r,s,n,m,k}^{p,q+1}(t_{1},t_{2})-M_{r,s-1,n,m,k}^{p,q+1}(t_{1},t_{2})\right]-\left(\dfrac{2-\theta}{2}\right)\dfrac{\gamma_{s}^{k-m,n+1}C^{*}}{(q+1)}\nonumber \\ \,\,\,\,\,\,
			\times\left[M_{r,s,n+1,m,k-m}^{p,q+1}(t_{1},t_{2})-M_{r,s-1,n+1,m,k-m}^{p,q+1}(t_{1},t_{2})\right]- \dfrac{t_{2}}{(q+1)}M_{r,n,m,k}^{p,q+1}(t_{1},t_{2}),\,\, m\ne-1\,\,\
		\end{eqnarray}
	\end{theorem}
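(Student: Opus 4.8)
The plan is to mimic the proof of Theorem \ref{theo2}, but now working with the joint density \eqref{jointpdf.gos} and carrying the integration by parts in the $y$-variable. Starting from the definition of the joint mgf,
\begin{equation}
M_{\left(r,s,n,m,k\right)}(t_{1},t_{2}) =\frac{C_{s-1}}{\left(r-1\right)!\left(s-r-1\right)!}\int_{0}^{\infty}\int_{x}^{\infty} e^{t_{1}x+t_{2}y}\left[\bar{F}\left(x\right)\right]^{m} f\left(x\right)g_{m}^{r-1}\left(F\left(x\right)\right)\left[h_{m}F\left(y\right)-h_{m}F\left(x\right)\right]^{s-r-1}\left[\bar{F}\left(y\right)\right]^{\gamma_{s}-1}f\left(y\right)dy\,dx,\nonumber
\end{equation}
I would substitute the relation \eqref{rel.pc}, $f(y)=\bar F(y)-\left(\tfrac{2-\theta}{2}\right)\bar F(y)^{2}$, into the factor $f(y)$ that multiplies $e^{t_{2}y}$. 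This splits $M_{r,s,n,m,k}(t_{1},t_{2})$ into two terms, one with $\left[\bar F(y)\right]^{\gamma_{s}}$ and one with $\left[\bar F(y)\right]^{\gamma_{s}+1}$, exactly paralleling the passage from \eqref{2.7} to \eqref{2.8}. Call these $Z(x)$ and $Z^{*}(x)$ by analogy with the single-moment proof.

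Next I would integrate each inner $y$-integral by parts, differentiating $e^{t_{2}y}$ times whatever is left and integrating the part $\left[\bar F(y)\right]^{\gamma_{s}}f(y)$ (respectively $\left[\bar F(y)\right]^{\gamma_{s}+1}f(y)$), using that $\tfrac{d}{dy}\left(-\tfrac{1}{\gamma_{s}}\left[\bar F(y)\right]^{\gamma_{s}}\right)=\left[\bar F(y)\right]^{\gamma_{s}-1}f(y)$ and similarly with $\gamma_{s}+1$ in the exponent. The boundary term at $y=x$ recombines the remaining factors into exactly the density structure of $X(s-1,n,m,k)$ (its $\gamma$ parameter being $\gamma_{s}-1+(m+1)=\gamma_{s-1}$ when $m_i=m$), which is why the index $s$ drops to $s-1$; the boundary term at $y=\infty$ vanishes because $\bar F(\infty)=0$ and $m\ge -1$ keeps the exponents non-negative. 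This is the standard Kamps-type identity for \textit{gos} and produces, after reinstating the outer $x$-integral,
\begin{equation}
Z(x)=\frac{\gamma_{s}}{t_{2}}\left[M_{r,s,n,m,k}(t_{1},t_{2})-M_{r,s-1,n,m,k}(t_{1},t_{2})\right],\qquad Z^{*}(x)=\frac{\gamma_{s}^{k-m,n+1}C^{*}}{t_{2}}\left[M_{r,s,n+1,m,k-m}(t_{1},t_{2})-M_{r,s-1,n+1,m,k-m}(t_{1},t_{2})\right],\nonumber
\end{equation}
where the shift $n\mapsto n+1$, $k\mapsto k-m$ in $Z^{*}$ is forced by matching $\left[\bar F(y)\right]^{\gamma_{s}+1}$ to a genuine \textit{gos} normalizing constant, precisely as in Theorem \ref{theo2}. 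Collecting terms gives the first-order relation
\begin{equation}
t_{2}M_{r,s,n,m,k}(t_{1},t_{2})=\gamma_{s}\left[M_{r,s,n,m,k}(t_{1},t_{2})-M_{r,s-1,n,m,k}(t_{1},t_{2})\right]-\left(\frac{2-\theta}{2}\right)\gamma_{s}^{k-m,n+1}C^{*}\left[M_{r,s,n+1,m,k-m}(t_{1},t_{2})-M_{r,s-1,n+1,m,k-m}(t_{1},t_{2})\right].\nonumber
\end{equation}
Finally, differentiating this identity $p$ times with respect to $t_{1}$ and $q+1$ times with respect to $t_{2}$ and evaluating the Leibniz expansion (only the $t_{2}$-derivative hitting the explicit $t_{2}$ factor survives with a nonzero contribution at the lowest order, giving the $(q+1)M^{p,q}$ term that becomes the stated $\tfrac{t_{2}}{q+1}M^{p,q+1}$ after rearrangement) yields the claimed recurrence.

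I expect the main obstacle to be bookkeeping rather than anything conceptual: making sure the integration by parts in $y$ genuinely lands on the $(s-1)$-th \textit{gos} density with the correct constants $C_{s-1}$ versus $C_{s-2}$, and tracking how the exponent bump in $Z^{*}$ translates into the $(n+1,k-m)$ reparametrization with the ratio $C^{*}=c_{r-1}/c_{r-1}^{n+1,k-m}$ (here it is $\gamma_{s}^{k-m,n+1}$ that appears, not $\gamma_{r}$). A secondary point to handle carefully is the Leibniz differentiation of the product $t_{2}M_{r,s,n,m,k}(t_{1},t_{2})$: one must check that after $q+1$ differentiations in $t_{2}$ the only term not already of the form $M^{p,q+1}$ is $(q+1)\,M_{r,s,n,m,k}^{p,q}$, which is what forces the factor $1/(q+1)$ and the isolated $-\tfrac{t_{2}}{q+1}M^{p,q+1}$ term in the statement. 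Everything else is routine, and the $m\neq-1$ restriction enters only through the use of \eqref{rel.pc} and the finiteness of the $h_m$, $g_m$ expressions, just as in the earlier theorems.
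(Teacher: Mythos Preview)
Your approach is essentially identical to the paper's: substitute \eqref{rel.pc} for $f(y)$ in the inner integral, integrate by parts in $y$ (the paper takes $dv=e^{t_{2}y}\,dy$ and differentiates the remaining product $[h_{m}F(y)-h_{m}F(x)]^{s-r-1}[\bar F(y)]^{\gamma_{s}}$, which is what your displayed results for $Z$ and $Z^{*}$ reflect despite the slightly garbled verbal description of which factor is being ``integrated''), then differentiate the resulting identity $p$ times in $t_{1}$ and $q+1$ times in $t_{2}$ via Leibniz. The paper writes $(\gamma_{s}+1)C^{*}$ in the intermediate relation where you write $\gamma_{s}^{k-m,n+1}C^{*}$, but these coincide, so there is no discrepancy.
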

	\begin{proof}
		In view of (1.3), the joint \textit{mgf} based on  \textit{gos} is
		\begin{eqnarray}
			M_{\left(r,s,n,m,k\right)}({t_{1},t_{2}}) =\frac{C_{s-1} }{\left(r-1\right)!\left(s-r-1\right)!}\int_{0}^{\infty} e^{t_{1}x }\left[\bar{F}\left(x\right)\right]^{m} f\left(x\right)\nonumber \\
			\times g_{m}^{r-1} \left(F\left(x\right)\right)I(x)dx.
		\end{eqnarray}
		\begin{eqnarray}
			I(x)=\int_{x}^{\infty}e^{t_{2}y}\left[h_{m} F\left(y\right)-h_{m} F\left(x\right)\right]^{s-r-1}\left[\bar{F}\left(y\right)\right]^{\gamma _{s} -1} f\left(y\right)dy\nonumber
		\end{eqnarray}
		or, making use of (1.9) in $I(x)$,
		\begin{equation}
			I(x)=W(x)-{\left(\dfrac{2-\theta}{2}\right)}W^{*}(x)
		\end{equation}
		where,
		\begin{eqnarray}
			W(x)= \int_{x}^{\infty}e^{t_{2}y}\left[h_{m} F\left(y\right)-h_{m} F\left(x\right)\right]^{s-r-1}\left[\bar{F}\left(y\right)\right]^{\gamma _{s}}dy\nonumber
		\end{eqnarray}
		Integrating by parts , we have $s\ge r+1$ 
		\begin{eqnarray}
			W(x)= \dfrac{\gamma_{s}}{t_{2}}\int_{x}^{\infty}e^{t_{2}y}\left[h_{m} F\left(y\right)-h_{m} F\left(x\right)\right]^{s-r-1}\left[\bar{F}\left(y\right)\right]^{\gamma _{s}-1}f(y)dy\nonumber\\
			-\dfrac{(s-r-1)}{t_{2}}\int_{x}^{\infty}e^{t_{2}y}\left[h_{m} F\left(y\right)-h_{m} F\left(x\right)\right]^{s-r-2}\left[\bar{F}\left(y\right)\right]^{\gamma _{s}-1}f(y)dy\nonumber
		\end{eqnarray}
		Proceeding in the same way for $W^{*}(x)$
		\begin{eqnarray}
			W^{*}(x)= \dfrac{\gamma_{s}+1}{t_{2}}\int_{x}^{\infty}e^{t_{2}y}\left[h_{m} F\left(y\right)-h_{m} F\left(x\right)\right]^{s-r-1}\left[\bar{F}\left(y\right)\right]^{\gamma _{s}}f(y)dy\nonumber\\
			-\dfrac{(s-r-1)}{t_{2}}\int_{x}^{\infty}e^{t_{2}y}\left[h_{m} F\left(y\right)-h_{m} F\left(x\right)\right]^{s-r-2}\left[\bar{F}\left(y\right)\right]^{\gamma _{s}+m+1}f(y)dy\nonumber
		\end{eqnarray}
		substituting $W(x)$ and $W^{*}(x)$ in (3.8) and then substitute the $I(x)$ in (3.7), we get after simplification 
		\begin{eqnarray}
			t_{2}M_{r,s,n,m,k}(t_{1},t_{2})={\gamma_{s}}\left[M_{r,s,n,m,k}(t_{1},t_{2})-M_{r,s-1,n,m,k}(t_{1},t_{2})\right]\nonumber\\
			-\left(\dfrac{2-\theta}{2}\right){\left(\gamma_{s}+1\right)C^{*}}\left[M_{r,s,n+1,m,k-m}(t_{1},t_{2})-M_{r,s-1,n+1,m,k-m}(t_{1},t_{2})\right]
		\end{eqnarray}
		Differentiating (3.9) $p$ times w.r.t. $t_{1}$ and $(q+1)$ times w.r.t. $t_{2}$, we obtain (3.6).
		We proceed to obtain the recurrence relation for joint moments of \textit{gos} from HLG distribution by setting $t_{1}$, $t_{2}=0$ in (3.6) as
		\begin{eqnarray}
			\mu_{r,n,m,k}^{p,q}=\dfrac{\gamma_{s}}{(q+1)}\left[\mu_{r,s,n,m,k}^{p,q+1}-\mu_{r,s-1,n,m,k}^{p,q+1}\right]\nonumber \\
		-\left(\dfrac{2-\theta}{2}\right)\dfrac{(\gamma_{s}+1)C^{*}}{(q+1)}\left[\mu_{r,s,n+1,m,k-m}^{p,q+1}-\mu_{r,s-1,n+1,m,k-m}^{p,q+1}\right]
		\end{eqnarray}
	\end{proof}
\noindent The results of Theorem [\ref{theo3.2}] are reduced for order statistics by considering $m=0$ and $k=1$ in (3.9). Thus, we get recurrence relation for joint \textit{mgf} as
			\begin{eqnarray}
			t_{2}M_{r,s:n}(t_{1},t_{2})={(n-s+1)}\left[M_{r,s:n}(t_{1},t_{2})-M_{r,s-1:n}(t_{1},t_{2})\right]-\left(\dfrac{2-\theta}{2}\right)\nonumber\\
			\times\dfrac{\left(n-s+2\right)\left(n-s+1\right)}{(n+1)}\left[M_{r,s:n+1}(t_{1},t_{2})-M_{r,s-1:n+1}(t_{1},t_{2})\right]
		\end{eqnarray}
\noindent After putting $p=1$ and $q=0$ in (3.10) and rearranging the terms, the result agrees with \cite{liu2020recurrence}. The numerical results of covariances and variances for order statistics for various configurations are obtained by \cite{liu2020recurrence}.
\section{Preliminary setup for Bayesian estimation}\label{sec:bayes}
	Let $X(1,n,m,k),X(2,n,m,k),\ldots,X(n,n,m,k)$
	be the $n$ $gos$ and $\boldsymbol{x}=(x_1,x_2,\ldots,x_n)$ is the observed value of $\boldsymbol{X}=(X_1,X_2,\ldots,X_n)$. In view of \eqref{eq1}, \eqref{cdf.hlg} and \eqref{pdf.hlg}, the likelihood function can be given as
	\begin{equation}\label{lik.eq}
		L\left(\theta|x\right)=k\left(\prod_{j=1}^{n-1}\gamma_{j}\right)\left(\prod_{i=1}^{n-1}\dfrac{2^{m+1}e^{-x_{i}(m+1)}}{\left(\theta+(2-\theta)e^{-x_{i}}\right)^{m+2}}\right)\left(\dfrac{2^{k}\theta e^{-x_{n}(k)}}{\left(\theta+(2-\theta)e^{-x_{n}}\right)^{k+1}}\right),\,\,\,  0<\theta<1.
	\end{equation}
	It is noteworthy that most of the procedures of Bayesian inference, which have been developed with the notion of the symmetric loss function, associating equal weight to the losses, either it is overestimation or underestimation, which is not realistic in various situations. Practically, we see a positive loss in most situations may be more severe than negative loss. To overcome the drawbacks of the symmetric loss function, asymmetric loss functions have been introduced. In this article, we consider symmetric and asymmetric loss functions to show the flexibility and applicability in various real-life situations of the calculated results. \\
	The mathematical form of these loss functions and their respective Bayes estimators are given as:
	The squared error loss function (SELF) is defined as
	\begin{equation}
		L_{1}(\delta, \lambda)=	(\delta- \lambda)^2,~~ \lambda>0.
	\end{equation}
	The Bayes estimator under SEL function is posterior mean $\left(\delta_{SEL}\right).$ 
	The LINEX loss function is defined as
	\begin{equation}\label{linexloss}
		L_{2}(\delta, \lambda)=e^{c(\delta- \lambda)}-c(\delta- \lambda)-1,\qquad c\ne 0
	\end{equation}
	with corresponding Bayes estimator as
	\begin{equation*}\label{linexbayes}
		\delta_{LINEX}=-\dfrac{1}{c}\ln \left(E(e^{-c \lambda})\right).
	\end{equation*}
	The General Entropy (GE) loss function is given as
	\begin{equation}\label{GEloss}
		L_3(\delta,\lambda)\propto \left(\dfrac{\delta}{\lambda}\right)^c-c\ln\left(\dfrac{\delta}{\lambda}\right)-1,\qquad c\ne  0
	\end{equation}
	with corresponding Bayes estimator as
	\begin{equation*}\label{GEbayes}
		\delta_{GE}=	\left[E(\lambda^{-c})\right]^{-1/c}.
	\end{equation*}
	Further we assume that $\theta$ has an informative prior with two parameter gamma distribution and is defined as
	\begin{equation}\label{prior.i}
		\pi(\theta)=\frac{b^{a}}{\Gamma{a}}{\theta^{a-1}}e^{-b\theta}, a,b>0.
	\end{equation}
	Now, the joint posterior density of $\theta$ is obtained by using \eqref{lik.eq} and \eqref{prior.i}, and is given as
	\begin{eqnarray}
		\pi(\theta|\boldsymbol{x})\propto  k\theta^{a}2^{(m+1)(n-1)+k}\left(\prod^{n-1}_{j=1}\gamma_j\right) \left(\prod^{n}_{i=1}\left(\theta+(2-\theta)e^{-x_{i}}\right)^{-(m+2)(k+1)}\right)\nonumber\\
		\times exp{\left[-\sum_{i=1}^{n-1}x_{i}(m+1)-kx_{n}-b\theta\right]},\,\,\, 0<\theta<1.
	\end{eqnarray} 
	
	It is observed that exact solution of posterior expectation is cumbersome due to the complex nature of posterior distribution. Hence, explicit form of Bayes estimators is difficult to obtain. We require some approximation tools for the computation purpose. In this direction two well known methods Lindley Approximation and Markov Chain Monte Carlo methods are considered. \par
	\subsection{Lindley's Approximation}
	\cite{lindley1980approximate} proposed a method of approximation of the ratio of the integrals and developed numerical results. 
	For our case, 
	\begin{equation}\label{lind.posterior}
		E(w(\theta)|\boldsymbol{x})=\dfrac{\int  w(\theta)e^{\mathbb{L}( \theta)+\rho( \theta)}d\theta}{\int e^{\mathbb{L}(\theta)+\rho( \theta)}d\theta}
	\end{equation}
	where $w(\theta)$ is the function of $\theta$, $\mathbb{L}( \theta)=\ln  L(\theta|\boldsymbol{x}) $ and $\rho( \theta)=\ln  \pi(\theta)$
	Using the Lindley approximation method in our case, $E(w(\theta)|\boldsymbol{x})$ can be approximated as
	\begin{align}\label{lind.bayesest}
		E(w( \hat{\theta})|\boldsymbol{x})&\approx w(\hat{\theta})+\dfrac{1}{2}\left[\left(\hat{w}_{\theta\theta}+2\hat{w}_{\theta}\hat{p}_{\theta}\right)\hat{\sigma}_{\theta\theta}\right]+\dfrac{1}{2}\left[\left(\hat{L}_{\theta\theta\theta}\hat{\sigma}_{\theta\theta\theta}\right)\hat{w}_{\theta}\hat{\sigma}_{\theta\theta}\right],
	\end{align}
	where 
	\begin{align}\label{lindley.1}
		\left.\begin{array}{ll}
			\hat{\theta}= \text{MLE of}~ \theta,~\hat{w_{\theta}}=\dfrac{\partial\hat{ w(\theta)}}{\partial\hat{\theta}},	\hat{w_{\theta\theta}}=\dfrac{\partial^2 w(\hat{\theta})}{\partial \hat{\theta}^2}\\ \hat{\sigma_{\theta\theta}}= -\dfrac{1}{\mathbb{\hat{L}}_{\theta\theta}},\,\
			\mathbb{\hat{L}}_{\theta\theta}= \dfrac{\partial^2 \ln  L(\theta|\boldsymbol{x})}
			{\partial \hat{\theta}^2},\,\,
			\mathbb{\hat{L}}_{\theta\theta\theta}= \dfrac{\partial^3 \ln  L(\hat{\theta}|\boldsymbol{x}) }{\partial \hat{\theta}^3},\,\
			\hat{p}_{{\theta}}=\dfrac{\partial \rho(\hat{\theta})}{\partial \hat{\theta}}.
		\end{array}\right\rbrace
	\end{align}
	
	Now, concerning our problem, the quantities given in \eqref{lindley.1} are reduced to the following forms:
	\begin{align}
		\left.
		\begin{array}{ll}
			L_{\theta \theta}= (m+2)\sum_{i=1}^{n-1}\dfrac{(1-e^{-x_{i}})}{\left(\theta+(2-\theta)e^{-x_{i}}\right)^{2}}-\dfrac{n}{\theta^{2}}+\dfrac{(k+1)(1-e^{-x_{n}})^2}{\left(\theta+(2-\theta)e^{-x_{i}}\right)^{2}}\\
			L_{\theta\theta\theta}= -2(m+2)\sum_{i=1}^{n-1}\dfrac{(1-e^{-x_{i}})^{3}}{\left(\theta+(2-\theta)e^{-x_{i}}\right)^{3}}+\dfrac{2n}{\theta^{3}}-\dfrac{(k+1)(1-e^{-x_{n}})^3}{\left(\theta+(2-\theta)e^{-x_{i}}\right)^{3}}\\
			\hat{p}_{{\theta}}= \dfrac{a(1-\theta)-1}{\theta}.
		\end{array}\right\rbrace
	\end{align}
	Further, we use the derived quantities to achieve the Bayes estimators under different loss functions.  Besides $w(\theta)$ and its derivatives, the quantities are common in each form of Bayes estimators. Under SELF, the posterior mean is the Bayes estimator, so for Bayes estimator of $\theta$ under SELF,
	\begin{equation*}
		w(\theta)=\theta ,~w_{\theta}=1,w_{\theta\theta}=0.
	\end{equation*}
	Under LINEX loss function, the quantity $w(\theta)$ and its derivatives, 
	\begin{equation*}
		w(\theta)=e^{-c\theta},~w_{\theta}=-ce^{-c\theta},~w_{\theta\theta}=c^2e^{-c\theta}.
	\end{equation*}
	Under GE loss function, the quantity $w(\theta)$ and its derivatives, 
	\begin{equation*}
		w(\theta)=\theta^{-c},~w_{\theta}=-c\theta^{-c-1},~w_{\theta\theta}=c(c+1)\theta^{-c-2}
	\end{equation*}
	Using the above quantities of $w(\theta)$ and it's derivative, Bayes estimator of $\theta$ can be obtained under different loss functions. 
	
	\subsection{Markov Chain Monte Carlo}
	In this section, we use the MCMC technique to obtain the approximate Bayes estimator of $\theta$ under the loss functions considered in this article. The MCMC technique is used to generate the random sample of unknown quantities with the help of posterior densities. The generated sample is then used to obtain the Bayes estimator with respect to the loss functions. It can be noticed that the posterior density of $\theta$ can not be reduced analytically to any known distribution. So, it is not easy to generate  the sample of $\theta$. We utilize the Metropolis Hasting (MH) algorithm with proposal density as a normal distribution (see \cite{gelman2013bayesian}, \cite{arshad2022stress}, \cite{azhad2022statistical}) to generate a sample. The following algorithm is used for generation of sample:
	\begin{description}
		\item[(i)] Initiate with prefixed value of $\theta$ as $\theta^{1}.$
		\item[(ii)] For $j=2$, generate $\theta^{j}$ by MH algorithm with the aid of normal distribution as proposal density from $\pi(\theta^{j-1}|\boldsymbol{x})$.
		\item[(iii)] Repeat (ii) for $j=3,\ldots \ldots,T$ times and obtain $(\theta^1,\theta^2,\ldots,\theta^T).$
	\end{description}
	Now, after employing algorithm and obtaining samples of $\theta,$ we calculate the Bayes estimators of $\theta$ as, $\delta^\theta_{SEL}=\dfrac{1}{T}\sum_{i=1}^{T}\theta^i$, $\delta^\theta_{LINEX}=-\dfrac{1}{c} \ln \left(\dfrac{1}{T}\sum_{i=1}^{T}e^{-c\theta^i}\right)$ and $\delta^\theta_{GE}=\left(\dfrac{1}{T}\sum_{i=1}^{T}(\theta^i)^{-c}\right)^{-1/c}$ under SELF, LINEX and GE loss functions, respectively.
	
	\section{Simulation Study}\label{sec:simu}
	In this section, a simulation study is conducted to monitor the performances of various derived Bayes estimators. The Bayes estimators are derived under the setup of \textit{gos}. As, \textit{gos} provides a unified approach for several ordered random variable based models. So, to show the practical applicability of derived results, we consider order statistics as a sub-model of \textit{gos}. We see that for $m=0$ and $k=1$ the \textit{gos} model to reduces to order statistics. The performances of Bayes estimators based on order statistics are measured using the criteria of average estimate (AE), average bias (AB) and mean squared error (MSE). For order statistics, the AEs, ABs and respective MSEs are reported in Tables [\ref{os.sel}-\ref{os.ge}] for various configurations i.e., $\theta=(0.3,0.6)$, $n\in(10,20,30,40,50)$, Prior I $(a=2,b=1)$ and Prior II $(a=2,b=2)$. These results are obtained by replicating the process 1000 times.
	From Table [\ref{os.sel}-\ref{os.ge}], the following observations are made:
	\begin{itemize}
		\item From Table [\ref{os.sel}], it is observed that under SELF, Bayes estimators of Lindley approximation are showing smaller MSEs than MCMC estimates. It is also observed that estimators coming from Prior II are showing smaller MSEs than estimators based on Prior I.
		\item From Table [\ref{os.linex}], it is observed that under LINEX loss, Bayes estimators of Lindley approximation are showing smaller MSEs than MCMC estimates. It is also observed that estimators coming from Prior II are showing smaller MSEs than estimators based on Prior I.
		\item From Table [\ref{os.ge}], it is observed that under GE loss, Bayes estimators of Lindley approximation are showing smaller MSEs than MCMC estimates. It is also observed that estimators coming from Prior II are showing smaller MSEs than estimators based on Prior I.
	\end{itemize}
	From above observations, it is clear that in our particular study Bayes estimators based on Lindley approximation are performing better than MCMC estimators. One can also draw the inference that as we increase the value of $b$ in simulation study then estimators` performance are improving. In general it is found that the estimators based on GE loss function seems to be performing better for $c\in (-0.5,0.5,1)$. For the behaviour of generated chains of MCMC, trace plots of $\theta$ are given in Figure [\ref{plots}]. All these results are discussed for this particular study.

	\begin{table}[htbp]
		\centering
		\caption{Lindley and MCMC Bayes estimates of unknown quantities based on Order Statistics under SEL function}
		\begin{tabular}{cc|ccc|ccc}
			\toprule
			\multicolumn{8}{c}{Prior I} \\
			\midrule
			\multirow{2}[4]{*}{$\theta$} & \multirow{2}[4]{*}{$n$} & \multicolumn{3}{c|}{Lindley} & \multicolumn{3}{c}{MCMC} \\
			\cmidrule{3-8}           &       & \multicolumn{1}{c|}{AE} & \multicolumn{1}{c|}{AB} & MSE   & \multicolumn{1}{c|}{AE} & \multicolumn{1}{c|}{AB} & \multicolumn{1}{c|}{MSE} \\
			\midrule
			0.3   & 10    & 0.384991 & 0.084991 & 0.015136 & \multicolumn{1}{r}{0.510198} & \multicolumn{1}{r}{0.210198} & \multicolumn{1}{r}{0.066864} \\
			0.3   & 20    & 0.336058 & 0.036058 & 0.004984 & \multicolumn{1}{r}{0.440272} & \multicolumn{1}{r}{0.140272} & \multicolumn{1}{r}{0.040311} \\
			0.3   & 30    & 0.319147 & 0.019147 & 0.002806 & \multicolumn{1}{r}{0.401055} & \multicolumn{1}{r}{0.101055} & \multicolumn{1}{r}{0.026447} \\
			0.3   & 40    & 0.310965 & 0.010965 & 0.001968 & \multicolumn{1}{r}{0.377314} & \multicolumn{1}{r}{0.077314} & \multicolumn{1}{r}{0.018298} \\
			0.3   & 50    & 0.306136 & 0.006136 & 0.001543 & \multicolumn{1}{r}{0.362261} & \multicolumn{1}{r}{0.062261} & \multicolumn{1}{r}{0.013431} \\
			0.6   & 10    & 0.737414 & 0.137414 & 0.046002 & \multicolumn{1}{r}{0.673480} & \multicolumn{1}{r}{0.073480} & \multicolumn{1}{r}{0.018968} \\
			0.6   & 20    & 0.653733 & 0.053733 & 0.016267 & \multicolumn{1}{r}{0.674178} & \multicolumn{1}{r}{0.074178} & \multicolumn{1}{r}{0.020668} \\
			0.6   & 30    & 0.625327 & 0.025327 & 0.009748 & \multicolumn{1}{r}{0.670912} & \multicolumn{1}{r}{0.070912} & \multicolumn{1}{r}{0.020700} \\
			0.6   & 40    & 0.611830 & 0.011830 & 0.007142 & \multicolumn{1}{r}{0.668367} & \multicolumn{1}{r}{0.068367} & \multicolumn{1}{r}{0.019633} \\
			0.6   & 50    & 0.603981 & 0.003981 & 0.005776 & \multicolumn{1}{r}{0.665214} & \multicolumn{1}{r}{0.065214} & \multicolumn{1}{r}{0.018427} \\
			\midrule
			\multicolumn{8}{c}{Prior II} \\
			\midrule
			\multirow{2}[4]{*}{$\theta$} & \multirow{2}[4]{*}{$n$} & \multicolumn{3}{c|}{Lindley} & \multicolumn{3}{c}{MCMC} \\
			\cmidrule{3-8}           &       & \multicolumn{1}{c|}{AE} & \multicolumn{1}{c|}{AB} & MSE   & \multicolumn{1}{c|}{AE} & \multicolumn{1}{c|}{AB} & \multicolumn{1}{c|}{MSE} \\
			\midrule
			0.3   & 10    & 0.368686 & 0.068686 & 0.011495 & 0.425618 & 0.125618 & 0.030661 \\
			0.3   & 20    & 0.326848 & 0.026848 & 0.004064 & 0.391349 & 0.091349 & 0.021955 \\
			0.3   & 30    & 0.312646 & 0.012646 & 0.002435 & 0.370522 & 0.070522 & 0.016353 \\
			0.3   & 40    & 0.305899 & 0.005899 & 0.001785 & 0.356750 & 0.056750 & 0.012406 \\
			0.3   & 50    & 0.301975 & 0.001975 & 0.001443 & 0.347253 & 0.047253 & 0.009743 \\
			0.6   & 10    & 0.672183 & 0.072183 & 0.024382 & 0.564762 & -0.035238 & 0.012346 \\
			0.6   & 20    & 0.616891 & 0.016891 & 0.011165 & 0.586416 & -0.013585 & 0.011931 \\
			0.6   & 30    & 0.599317 & -0.000683 & 0.007873 & 0.596816 & -0.003184 & 0.011744 \\
			0.6   & 40    & 0.591564 & -0.008436 & 0.006325 & 0.603181 & 0.003181 & 0.011081 \\
			0.6   & 50    & 0.587333 & -0.012667 & 0.005413 & 0.607671 & 0.007670 & 0.010450 \\
			\bottomrule
		\end{tabular}%
		\label{os.sel}%
	\end{table}%
	\begin{landscape}
		\begin{table}[htbp]
			\centering
			\caption{Lindley and MCMC Bayes estimates of unknown quantities based on Order Statistics under LINEX loss function}
			\begin{tabular}{cc|rrr|rrr|rrr|rrr}
				\toprule
				\multicolumn{8}{c|}{Prior I}                                  & \multicolumn{6}{c}{ Prior II} \\
				\midrule
				\multicolumn{14}{c}{                                      c=0.5} \\
				\midrule
				\multirow{2}[4]{*}{$\theta$} & \multirow{2}[4]{*}{$n$} & \multicolumn{3}{c|}{Lindley} & \multicolumn{3}{c|}{MCMC} & \multicolumn{3}{c|}{Lindley} & \multicolumn{3}{c}{MCMC} \\
				\cmidrule{3-14}           &       & \multicolumn{1}{c}{AE} & \multicolumn{1}{c}{AB} & \multicolumn{1}{c|}{MSE} & \multicolumn{1}{c}{AE} & \multicolumn{1}{c}{AB} & \multicolumn{1}{c|}{MSE} & \multicolumn{1}{c}{AE} & \multicolumn{1}{c}{AB} & \multicolumn{1}{c|}{MSE} & \multicolumn{1}{c}{AE} & \multicolumn{1}{c}{AB} & \multicolumn{1}{c}{MSE} \\
				\midrule
				0.3   & 10    & 0.349994 & 0.049994 & 0.002500 & 0.501032 & 0.201032 & 0.062809 & 0.340709 & 0.040709 & 0.001657 & 0.419163 & 0.119163 & \multicolumn{1}{c}{0.028803} \\
				0.3   & 30    & 0.317992 & 0.017992 & 0.000324 & 0.397188 & 0.097188 & 0.025273 & 0.314757 & 0.014757 & 0.000218 & 0.367562 & 0.067562 & \multicolumn{1}{c}{0.015709} \\
				0.3   & 50    & 0.312768 & 0.012768 & 0.000164 & 0.360179 & 0.060179 & 0.012975 & 0.310574 & 0.010574 & 0.000112 & 0.345495 & 0.045495 & \multicolumn{1}{c}{0.009452} \\
				0.6   & 10    & 0.677567 & 0.077567 & 0.006017 & 0.664959 & 0.064959 & 0.018123 & 0.640172 & 0.040172 & 0.001614 & 0.557881 & -0.042119 & \multicolumn{1}{c}{0.012926} \\
				0.6   & 30    & 0.628020 & 0.028020 & 0.000786 & 0.665345 & 0.065345 & 0.020029 & 0.615045 & 0.015045 & 0.000227 & 0.592384 & -0.007616 & \multicolumn{1}{c}{0.011784} \\
				0.6   & 50    & 0.620128 & 0.020128 & 0.000407 & 0.660904 & 0.060904 & 0.017875 & 0.611332 & 0.011332 & 0.000130 & 0.604261 & 0.004261 & \multicolumn{1}{c}{0.010375} \\
				\midrule
				\multicolumn{14}{c}{                                      c=1} \\
				\midrule
				0.3   & 10    & 0.348252 & 0.048252 & 0.002328 & 0.492072 & 0.192072 & 0.058949 & 0.338771 & 0.038771 & 0.001503 & 0.412879 & 0.112879 & 0.027055 \\
				0.3   & 30    & 0.317257 & 0.017257 & 0.000298 & 0.393430 & 0.093430 & 0.024159 & 0.313998 & 0.013998 & 0.000196 & 0.364671 & 0.064671 & 0.015096 \\
				0.3   & 50    & 0.312258 & 0.012258 & 0.000151 & 0.358142 & 0.058142 & 0.012539 & 0.310052 & 0.010052 & 0.000102 & 0.343769 & 0.043768 & 0.009173 \\
				0.6   & 10    & 0.669357 & 0.069357 & 0.004811 & 0.656378 & 0.056378 & 0.017379 & 0.631175 & 0.031175 & 0.000972 & 0.551051 & -0.048949 & 0.013575 \\
				0.6   & 30    & 0.624925 & 0.024925 & 0.000622 & 0.659792 & 0.059792 & 0.019401 & 0.611850 & 0.011850 & 0.000141 & 0.587986 & -0.012014 & 0.011853 \\
				0.6   & 50    & 0.618008 & 0.018008 & 0.000326 & 0.656619 & 0.056619 & 0.017352 & 0.609161 & 0.009161 & 0.000085 & 0.600878 & 0.000878 & 0.010319 \\
				\midrule
				\multicolumn{14}{c}{                                      c=-0.5} \\
				\midrule
				0.3   & 10    & 0.353197 & 0.053197 & 0.002830 & 0.519553 & 0.219553 & 0.07111 & 0.344339 & 0.044339 & 0.001966 & 0.432240 & 0.132240 & 0.032632 \\
				0.3   & 30    & 0.319420 & 0.019420 & 0.000378 & 0.405032 & 0.105032 & 0.027683 & 0.316240 & 0.016240 & 0.000264 & 0.373553 & 0.073553 & 0.017031 \\
				0.3   & 50    & 0.313770 & 0.013770 & 0.000190 & 0.364387 & 0.064387 & 0.013908 & 0.311602 & 0.011602 & 0.000135 & 0.349043 & 0.049043 & 0.010048 \\
				0.6   & 10    & 0.692079 & 0.092079 & 0.008479 & 0.681917 & 0.081917 & 0.01991 & 0.657100 & 0.057100 & 0.003261 & 0.571684 & -0.028316 & 0.011836 \\
				0.6   & 30    & 0.633950 & 0.033950 & 0.001154 & 0.676483 & 0.076483 & 0.021413 & 0.621291 & 0.021291 & 0.000454 & 0.601281 & 0.001281 & 0.011734 \\
				0.6   & 50    & 0.624240 & 0.024240 & 0.000590 & 0.669543 & 0.069543 & 0.019007 & 0.615601 & 0.015601 & 0.000245 & 0.611105 & 0.011105 & 0.010544 \\
				\bottomrule
			\end{tabular}%
			\label{os.linex}%
		\end{table}%
		
	\end{landscape}	
	
	\begin{landscape}
		\begin{table}[htbp]
			\centering
			\caption{Lindley and MCMC Bayes estimates of unknown quantities based on Order Statistics under GE loss function}
			\begin{tabular}{cc|rrr|rrr|rcc|ccc}
				\toprule
				\multicolumn{8}{c|}{Prior I}                                  & \multicolumn{6}{c}{ Prior II} \\
				\midrule
				\multicolumn{14}{c}{                                      c=0.5} \\
				\midrule
				\multirow{2}[4]{*}{$\theta$} & \multirow{2}[4]{*}{$n$} & \multicolumn{3}{c|}{Lindley} & \multicolumn{3}{c|}{MCMC} & \multicolumn{3}{c|}{Lindley} & \multicolumn{3}{c}{MCMC} \\
				\cmidrule{3-14}           &       & \multicolumn{1}{l}{AE} & \multicolumn{1}{l}{AB} & \multicolumn{1}{l|}{MSE} & \multicolumn{1}{l}{AE} & \multicolumn{1}{l}{AB} & \multicolumn{1}{l|}{MSE} & \multicolumn{1}{l}{AE} & \multicolumn{1}{l}{AB} & \multicolumn{1}{l|}{MSE} & \multicolumn{1}{l}{AE} & \multicolumn{1}{l}{AB} & \multicolumn{1}{l}{MSE} \\
				\midrule
				0.3   & 10    & 0.331193 & 0.031193 & 0.000973 & 0.450692 & 0.150692 & 0.046511 & \multicolumn{1}{c}{0.320908} & 0.020908 & 0.000437 & \multicolumn{1}{r}{0.377885} & \multicolumn{1}{r}{0.077885} & 0.021036 \\
				0.3   & 30    & 0.310986 & 0.010986 & 0.000121 & 0.374346 & 0.074346 & 0.020539 & \multicolumn{1}{c}{0.307626} & 0.007626 & 0.000058 & \multicolumn{1}{r}{0.348003} & \multicolumn{1}{r}{0.048003} & 0.012972 \\
				0.3   & 50    & 0.307977 & 0.007977 & 0.000064 & 0.346549 & 0.046549 & 0.011006 & \multicolumn{1}{c}{0.305720} & 0.005719 & 0.000033 & \multicolumn{1}{r}{0.333101} & \multicolumn{1}{r}{0.033101} & 0.008145 \\
				0.6   & 10    & 0.641840 & 0.041840 & 0.001751 & 0.626669 & 0.026669 & 0.017829 & \multicolumn{1}{c}{0.603471} & 0.003471 & 0.000012 & \multicolumn{1}{r}{0.522843} & \multicolumn{1}{r}{-0.077157} & 0.018722 \\
				0.6   & 30    & 0.615275 & 0.015275 & 0.000234 & 0.643388 & 0.043388 & 0.018879 & \multicolumn{1}{c}{0.602157} & 0.002157 & 0.000005 & \multicolumn{1}{r}{0.573007} & \multicolumn{1}{r}{-0.026993} & 0.013124 \\
				0.6   & 50    & 0.611463 & 0.011463 & 0.000133 & 0.644665 & 0.044665 & 0.016714 & \multicolumn{1}{c}{0.602582} & 0.002581 & 0.000007 & \multicolumn{1}{r}{0.590195} & \multicolumn{1}{r}{-0.009805} & 0.010757 \\
				\midrule
				\multicolumn{14}{c}{                                      c=1} \\
				\midrule
				0.3   & 10    & 0.323031 & 0.023031 & 0.000531 & 0.430095 & 0.130095 & 0.040656 & 0.312836 & 0.012836 & 0.000165 & 0.361615 & 0.061615 & 0.018591 \\
				0.3   & 30    & 0.308239 & 0.008239 & 0.000068 & 0.365679 & 0.065679 & 0.018885 & 0.304887 & 0.004887 & 0.000024 & 0.340646 & 0.040645 & 0.012062 \\
				0.3   & 50    & 0.306126 & 0.006126 & 0.000038 & 0.341443 & 0.041443 & 0.010324 & 0.303870 & 0.003870 & 0.000015 & 0.328473 & 0.028473 & 0.007707 \\
				0.6   & 10    & 0.625695 & 0.025695 & 0.000660 & 0.609032 & 0.009032 & 0.018406 & 0.588545 & -0.011455 & 0.000131 & 0.507690 & -0.092310 & 0.021798 \\
				0.6   & 30    & 0.609798 & 0.009798 & 0.000096 & 0.633769 & 0.033769 & 0.018522 & 0.596818 & -0.003182 & 0.000010 & 0.564812 & -0.035188 & 0.013821 \\
				0.6   & 50    & 0.607764 & 0.007764 & 0.000061 & 0.637646 & 0.037646 & 0.016280 & 0.598938 & -0.001062 & 0.000001 & 0.584265 & -0.015735 & 0.010979 \\
				\midrule
				\multicolumn{14}{c}{                                      c=-0.5} \\
				\midrule
				0.3   & 10    & 0.345700 & 0.045700 & 0.002089 & 0.49088 & 0.19088 & 0.059691 & 0.336025 & 0.036024 & 0.001298 & 0.409961 & 0.109961 & 0.027099 \\
				0.3   & 30    & 0.316254 & 0.016254 & 0.000265 & 0.392045 & 0.092045 & 0.02432 & 0.312968 & 0.012968 & 0.000168 & 0.362949 & 0.062949 & 0.015117 \\
				0.3   & 50    & 0.311565 & 0.011565 & 0.000134 & 0.356959 & 0.056959 & 0.012558 & 0.309346 & 0.009346 & 0.000088 & 0.342492 & 0.042492 & 0.009162 \\
				0.6   & 10    & 0.672074 & 0.072074 & 0.005195 & 0.658899 & 0.058899 & 0.018183 & 0.634192 & 0.034192 & 0.001169 & 0.551398 & -0.048602 & 0.014032 \\
				0.6   & 30    & 0.625961 & 0.025960 & 0.000675 & 0.661973 & 0.061973 & 0.019978 & 0.612918 & 0.012918 & 0.000167 & 0.589019 & -0.010981 & 0.012090 \\
				0.6   & 50    & 0.618717 & 0.018717 & 0.000352 & 0.658458 & 0.058458 & 0.017791 & 0.609886 & 0.009886 & 0.000099 & 0.601902 & 0.001902 & 0.010494 \\
				\bottomrule
			\end{tabular}%
			\label{os.ge}%
		\end{table}%
		
	\end{landscape}	
	\begin{figure}[h]
		\centering
		\begin{minipage}{.5\textwidth}
			\centering
			\includegraphics[width=1\linewidth]{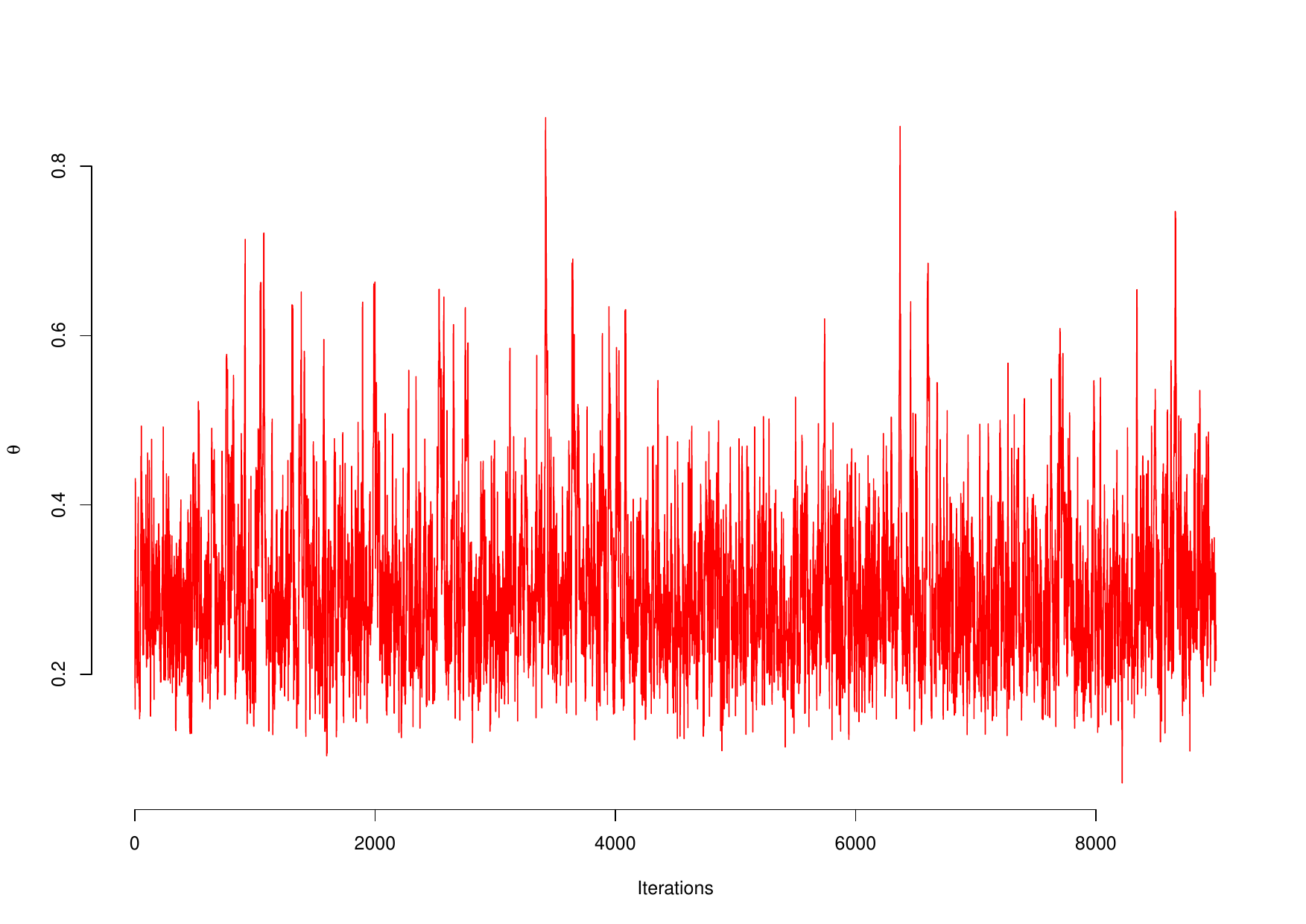}
			\captionof*{figure}{For $\theta=0.3$, $a=2,~b=1$}
			\label{fig:t2}
		\end{minipage}%
		\begin{minipage}{.5\textwidth}
			\centering
			\includegraphics[width=1\linewidth]{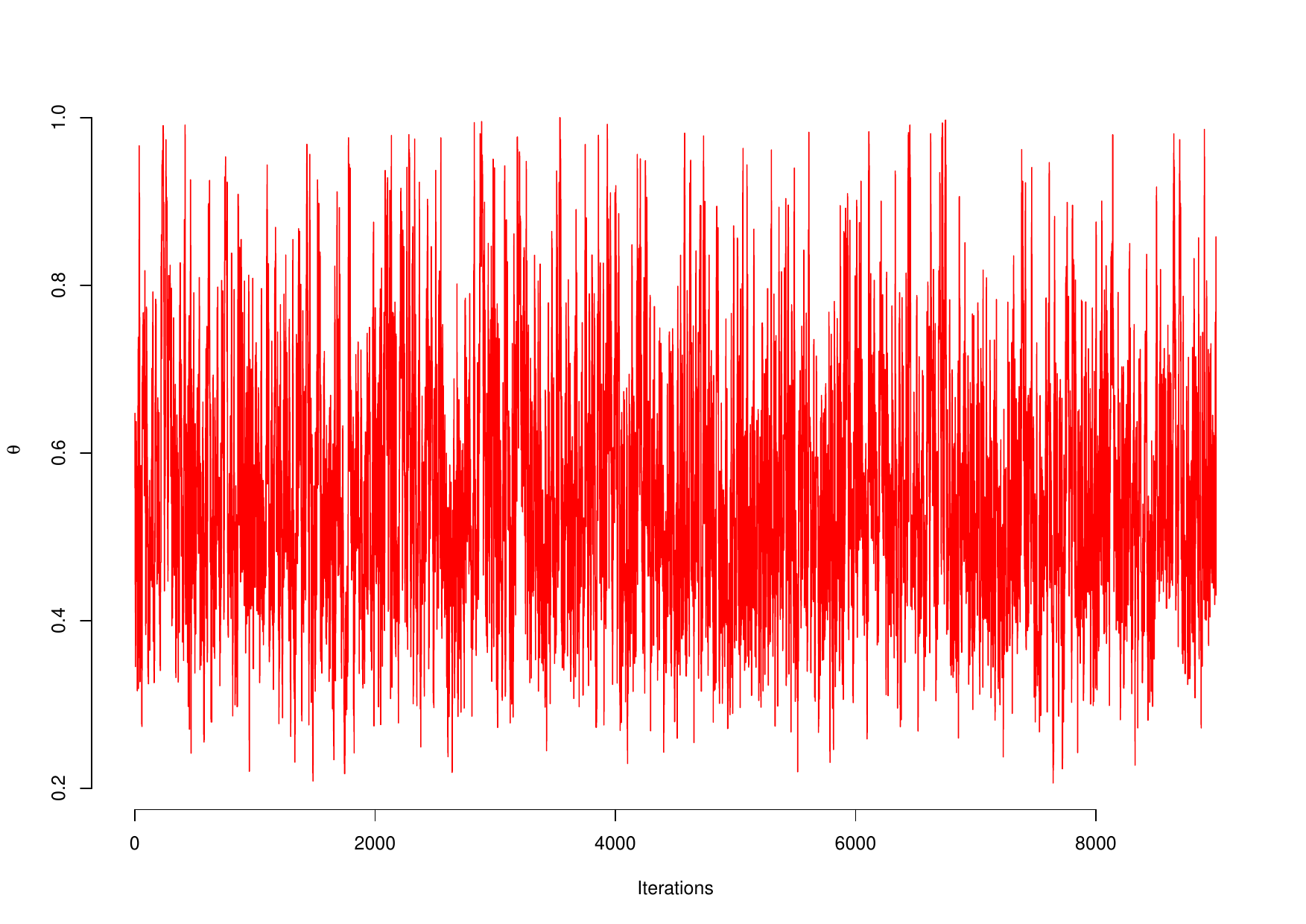}
			\captionof*{figure}{For $\theta=0.5$, $a=2,~b=1$}
			\label{fig:t1}
		\end{minipage}
		\caption{Trace plots of $\theta$}
		\label{plots}
	\end{figure}
	
	\section{Data Analysis}\label{sec:real}
	\begin{application} \label{appli1}
		In order to exhibit the application of proposed results in the field of demographic study, we have considered mortality rates of 30 days of Netherlands country due to COVID-19 from March 31, 2020 to April 30, 2020. The COVID-19 is the third-highest cause of deaths in $2020$ which has been revealed by the US Centers for Disease Control and Prevention (CDC). The mortality rate is calculated by the ratio of number of deaths and total number of cases (reported cases per 100,000) (see \cite{albalawi2022estimation}). The mortality rates of 30 days of Netherlands is reported in the Table [\ref{tab:realdata1}].
	\end{application}
	\begin{table}[h]
		\centering
		\caption{Mortality rates of Netherlands}
		\begin{tabular}{ccccccccccc}
			\toprule
			& 14.918 & 10.656 & 12.274 & 10.289 & 10.832 & 7.099 & 5.928 & 13.211 & 7.968 & 7.584\\
			& 5.555 & 6.027 & 4.097 & 3.611 & 4.960 & 7.498 & 6.940 & 5.307 & 5.048 & 2.857\\
			& 2.254 & 5.431 & 4.462 & 3.883 & 3.461 & 3.647 & 1.974 & 1.273 & 1.416 & 4.235\\
			\bottomrule
		\end{tabular} 
		\label{tab:realdata1}
	\end{table} 
	We tend towards the Kolmogorov - Smirnov (K-S) test to see whether the data supports HLG distribution or not? After employing KS test, it is observed that the data given in Table [\ref{tab:realdata1}] support HLG distribution for $\theta = 0.0079$ with $p-$ value $0.1915$ and KS distance  $0.1943$. This claim is also supported with the aid of Figure [\ref{fig:fittingmortality}].
	\begin{figure}
		\centering
		\includegraphics[width=0.7\linewidth]{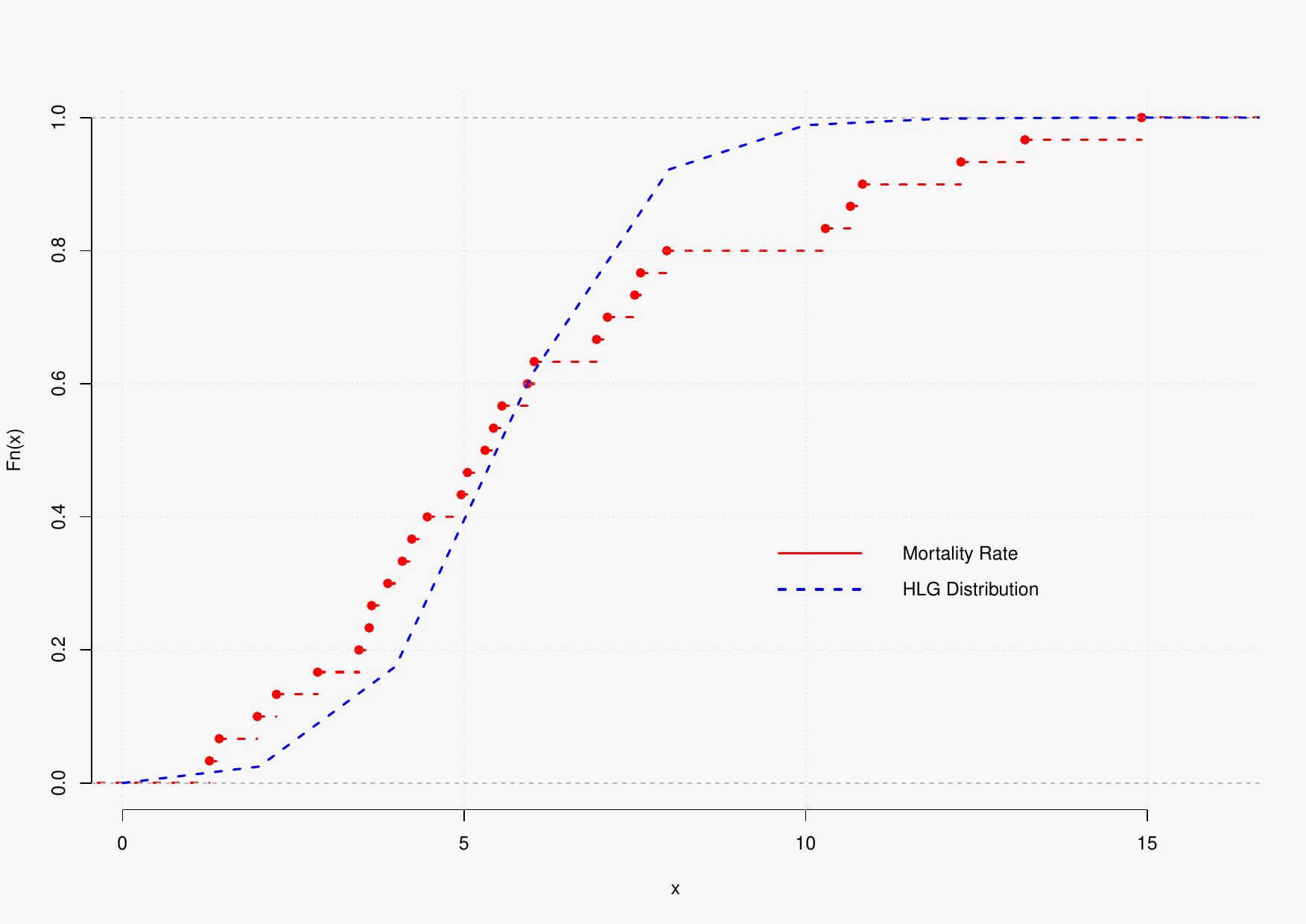}
		\caption{Fitting of Mortality Data with HLG Distribution}
		\label{fig:fittingmortality}
	\end{figure}
	From the $p-$ value, we observe that the mortality data set of Netherlands can be analyzed by using HLG distribution. In this direction, we have arranged the data in increasing order and estimated $\theta$ for classical and Bayesian scheme and the results are reported in Table [\ref{tab:est_mortality}] under the setup of order statistics. From the simulation study, it is concluded that estimators based on GE loss function are performing better. So, for descriptive analysis of the mortality data we may consider $\delta^\theta_{GE}=0.01111$ for $a=2,~b=2$ and $c=1.5.$ 
	\begin{table}[htbp]
		\centering
		\caption{Bayes estimates of $\theta$ for mortality data of Netherlands}
		\begin{tabular}{c|c|c|ccc|ccc}
			\toprule
			\multirow{2}[3]{*}{$c$} & \multirow{2}[3]{*}{$a$} & \multirow{2}[3]{*}{$b$} & \multicolumn{3}{c|}{Lindley} & \multicolumn{3}{c}{MCMC} \\
			\cmidrule{4-9}          &       &       & $\delta^\theta_{SEL}$ & $\delta^\theta_{LIN}$ & \multicolumn{1}{c|}{$\delta^\theta_{GE}$} & $\delta^\theta_{SEL}$ & $\delta^\theta_{LIN}$ & $\delta^\theta_{GE}$ \\
			\midrule
			\multirow{3}[1]{*}{-0.5} & \multirow{12}[2]{*}{2} & 1     & 0.01081 & 0.01081 & 0.01032 & 0.01200 & 0.01201 & 0.01158 \\
			&       & 1.5   & 0.01066 & 0.01066 & 0.01031 & 0.01135 & 0.01136 & 0.01094 \\
			&       & 2     & 0.01054 & 0.01054 & 0.01030 & 0.01168 & 0.01169 & 0.01126 \\
			\multirow{3}[0]{*}{0.5} &       & 1     & 0.01046 & 0.01046 & 0.01138 & 0.01105 & 0.01105 & 0.00996 \\
			&       & 1.5   & 0.01033 & 0.01032 & 0.01120 & 0.01117 & 0.01116 & 0.01021 \\
			&       & 2     & 0.01019 & 0.01019 & 0.01103 & 0.01113 & 0.01113 & 0.01012 \\
			\multirow{3}[0]{*}{1} &       & 1     & 0.01007 & 0.01007 & 0.01137 & 0.01139 & 0.01138 & 0.00992 \\
			&       & 1.5   & 0.00994 & 0.00994 & 0.01119 & 0.01139 & 0.01138 & 0.00984 \\
			&       & 2     & 0.00981 & 0.00981 & 0.01102 & 0.01142 & 0.01141 & 0.00970 \\
			\multirow{3}[1]{*}{1.5} &       & 1     & 0.00970 & 0.00970 & 0.01149 & 0.01065 & 0.01064 & 0.00929 \\
			&       & 1.5   & 0.00958 & 0.00958 & 0.01130 & 0.01203 & 0.01201 & 0.00989 \\
			&       & 2     & 0.00947 & 0.00946 & 0.01111 & 0.01108 & 0.01107 & 0.00935 \\
			\bottomrule
		\end{tabular}%
		\label{tab:est_mortality}%
	\end{table}%

	\begin{application}\label{real2}
		In order to exhibit the application of proposed study in the field of reliability, we have considered a lifetime data of traction motors.
		\cite{jung2007analysis} investigated the eligibility of warranty claim of the same dataset  by assuming that a two dimensional warranty has been provided by the manufacture. The data set is taken from the maintenance records of a type of locomotive traction motor. The data present the time since inception of service and miles accumulated by different
		traction motors when they were returned to the maintenance depot upon failing. The original data set can be accessed from \cite{eliashberg1997calculating}. The considered data set is of bivariate nature containing age and usage of motors.  We have extracted only the age factor of the data set and performed analysis for the considered problem. The data set is given in Table [\ref{tab:realdata2}]
		
		\begin{table}[htbp]
			\centering
			\caption{Age for traction motors}
			
			\begin{tabular}{cccccccccc}
				\toprule
				1.66  & 3.35  & 1.28  & 0.01  & 0.41  & 4.98  & 0.22  & 0.02  & 1.9   & 1.7 \\
				0.35  & 1.64  & 0.31  & 0.27  & 0.59  & 5.71  & 2.61  & 2.09  & 0.27  & 1.4 \\
				2.49  & 1.45  & 0.65  & 2.95  & 0.75  & 4.99  & 0.32  & 0.29  & 2.21  & 1.4 \\
				2.23  & 3.4   & 3.97  & 1.66  & 9.52  & 1.6   & 0.48  & 12    & 3.16  & 8.27 \\
				\bottomrule
			\end{tabular}%
			\label{tab:realdata2}%
		\end{table}%
	\end{application}
	Again using the same approach, we find that dataset given in Table [\ref{tab:realdata2}] support HLG distribution for $\theta=0.5010477$ with KS distance as $0.14129$ and $p-$ value as $0.4016.$  This claim is also supported with the aid of Figure [\ref{fig:realdata2}].
	\begin{figure}
		\centering
		\includegraphics[width=0.7\linewidth]{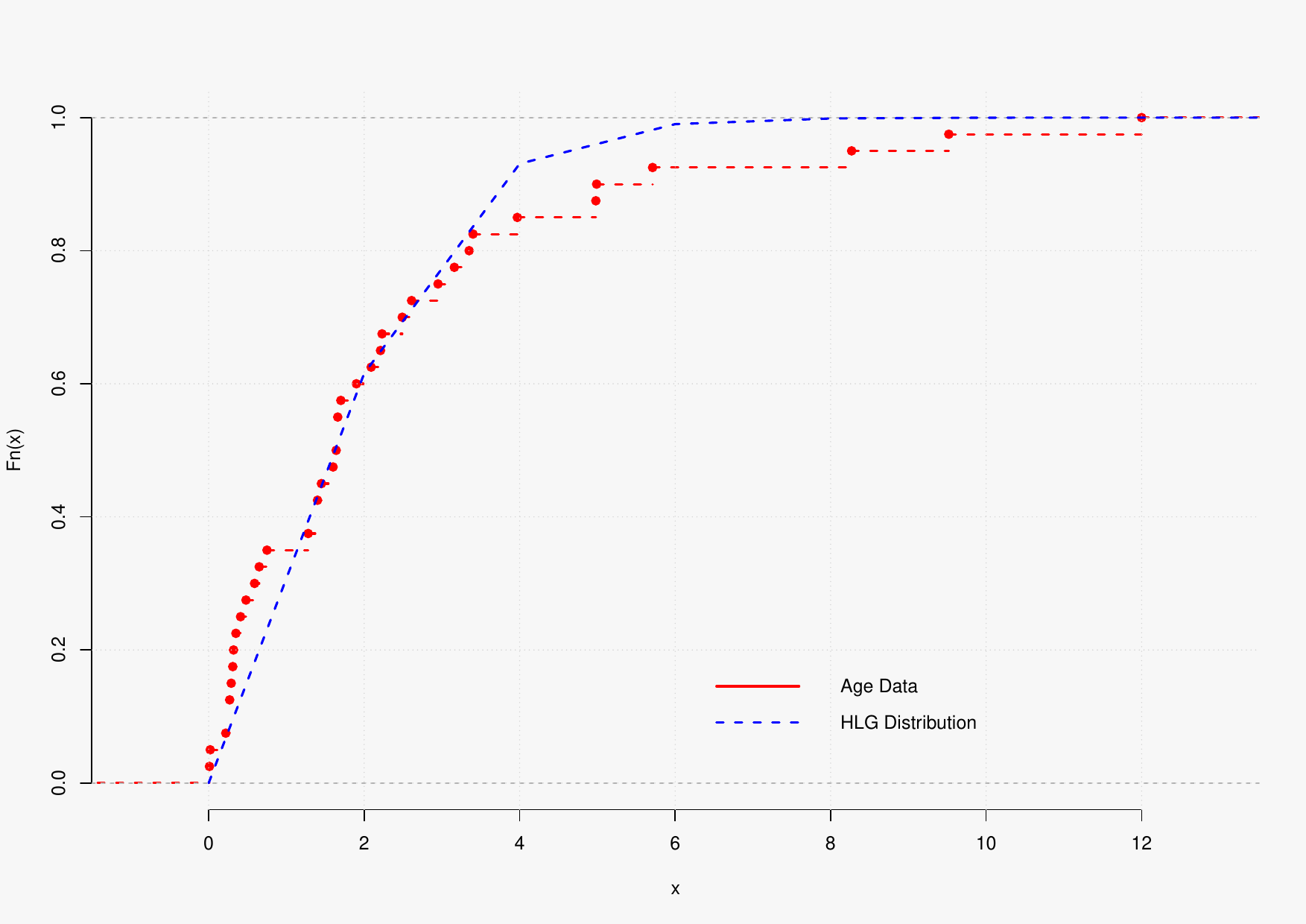}
		\caption{Fitting of  age for traction motors data with HLG Distribution}
		\label{fig:realdata2}
	\end{figure}
	From the $p-$ value, we observe that the age data for traction motors can be analyzed by using HLG distribution. In this direction, we have arranged the data in increasing order and estimated $\theta$ under classical and Bayesian scheme and the results are reported in Table [\ref{tab:age_data}] under the setup of order statistics. From the study, it is concluded that estimators based on GE loss function are performing better. So, for descriptive analysis of the mortality data we may consider $\delta^\theta_{GE}=0.57354$ for $a=2,~b=2$ and $c=1.5.$
	\begin{table}[htbp]
		\centering
		\caption{Bayes estimates of $\theta$ for age data of traction motors}
		\begin{tabular}{c|c|c|ccc|ccc}
			\toprule
			\multirow{2}[3]{*}{$c$} & \multirow{2}[3]{*}{$a$} & \multirow{2}[3]{*}{$b$} & \multicolumn{3}{c|}{Lindley} & \multicolumn{3}{c}{MCMC} \\
			\cmidrule{4-9}          &       &       & $\delta^\theta_{SEL}$ & $\delta^\theta_{LIN}$ & $\delta^\theta_{GE}$ & $\delta^\theta_{SEL}$ & $\delta^\theta_{LIN}$ & $\delta^\theta_{GE}$ \\ \midrule
			\multirow{3}[0]{*}{-0.5} & \multirow{12}[1]{*}{2} & 1     & 0.59329 & 0.59676 & 0.57232 & 0.59022 & 0.59604 & 0.58029 \\
			&       & 1.5   & 0.58152 & 0.58551 & 0.55980 & 0.55614 & 0.56229 & 0.54536 \\
			&       & 2     & 0.56976 & 0.57421 & 0.54743 & 0.54267 & 0.54803 & 0.53318 \\
			\multirow{3}[0]{*}{0.5} &       & 1     & 0.59310 & 0.58922 & 0.58511 & 0.58304 & 0.57769 & 0.55525 \\
			&       & 1.5   & 0.58130 & 0.57693 & 0.57048 & 0.57043 & 0.56420 & 0.53770 \\
			&       & 2     & 0.56952 & 0.56473 & 0.55640 & 0.56046 & 0.55457 & 0.52831 \\
			\multirow{3}[0]{*}{1} &       & 1     & 0.59280 & 0.58463 & 0.59700 & 0.58300 & 0.57004 & 0.53632 \\
			&       & 1.5   & 0.58105 & 0.57193 & 0.58071 & 0.57715 & 0.56739 & 0.54234 \\
			&       & 2     & 0.56931 & 0.55939 & 0.56529 & 0.54109 & 0.53097 & 0.50368 \\
			\multirow{3}[1]{*}{1.5} &       & 1     & 0.59260 & 0.57973 & 0.60928 & 0.58027 & 0.56259 & 0.52658 \\
			&       & 1.5   & 0.58083 & 0.56660 & 0.59073 & 0.57068 & 0.55197 & 0.51311 \\
			&       & 2     & 0.56908 & 0.55375 & 0.57354 & 0.56009 & 0.54345 & 0.50825 \\
			\bottomrule
		\end{tabular}%
		\label{tab:age_data}%
	\end{table}%
	\section{Conclusion}
	In this article, we have considered half logistic geometric (HLG) distribution under the setup of generalized order statistics (\textit{gos}). We have provided the expression for single and product moments of the considered distribution. Approximation techniques i.e., Lindley and MCMC, are employed for the computation of Bayes estimator of unknown quantity of HLG distribution. We have considered the case of order statistics to discuss the behaviour of derived estimators. It has been observed that the Bayes estimators based on Lindley approximation are performing better than MCMC estimators. Inference can be made that as we increase the value of hyperparameter $b$ , the performance of estimator is improving. In general, it is found that the estimators based on GE loss function seems to be performing better for $c\in (-0.5,0.5,1)$.
	For the application of the considered problem in the real world, data from two fields i.e., reliability and demography are taken and analyzed. \\
\section*{Declarations}

\begin{itemize}
	\item \textbf{Funding:} Not Applicable
	\item \textbf{Competing interests: }
	No potential competing interest was reported by the authors.
	\item \textbf{Ethics approval:} Not Applicable 
	\item \textbf{Consent to participate:} Not Applicable
	\item \textbf{Consent for publication:} 
	All authors provide consent for publication.
	\item \textbf{Availability of data and materials:} Not Applicable
	\item \textbf{Research involving Human Participants and/or Animals:} Not Applicable
	\item \textbf{Informed consent:} Not Applicable
	\item \textbf{Authors' contributions:} 
	Conceptualization: Neetu Gupta, S. K. Neogy, Qazi J. Azhad, Bhagwati Devi; Methodology: Neetu Gupta, Qazi J. Azhad; Formal analysis and investigation: Neetu Gupta, Qazi J. Azhad, Bhagwati Devi; Writing - original draft preparation: Neetu Gupta, S. K. Neogy; Writing - review and editing: S. K. Neogy, Qazi J. Azhad, Bhagwati Devi;  Supervision: S. K. Neogy
\end{itemize}
	
	\bibliography{reference}
\end{document}